\begin{document}
%
\title{Integrative Dynamic Reconfiguration in a \\Parallel Stream Processing
Engine}

\numberofauthors{3}
\author{
\alignauthor Kasper Grud Skat Madsen\\
       \affaddr{University of \\Southern Denmark}\\
       \email{kaspergsm@imada.sdu.dk}
\alignauthor Yongluan Zhou\\
       \affaddr{University of \\Southern Denmark}\\
       \email{zhou@imada.sdu.dk}
\alignauthor Jianneng Cao\\
       \affaddr{Institute for Infocomm \\Research in Singapore}\\
       \email{caojn@i2r.a-star.edu.sg}
}

\date{1 February 2016}

\maketitle

\begin{abstract}
Load balancing, operator instance collocations and horizontal scaling are
critical issues in Parallel Stream Processing Engines to achieve low data
processing latency, optimized cluster utilization and minimized communication
cost respectively.  In previous work, these issues are typically tackled
separately and independently. We argue that these problems are tightly coupled
in the sense that they all need to determine the allocations of workloads and
migrate computational states at runtime.  Optimizing them independently would
result in suboptimal solutions.  Therefore, in this paper, we investigate how
these three issues can be modeled as one integrated optimization problem. In
particular, we first consider jobs where workload allocations have little effect
on the communication cost, and model the problem of load balance as a
Mixed-Integer Linear Program. Afterwards, we present an extended solution called
ALBIC, which support general jobs.  We implement the proposed techniques on top
of Apache Storm, an open-source Parallel Stream Processing Engine. The extensive
experimental results over both synthetic and real datasets show that our
techniques clearly outperform existing approaches.
  

%
%
\end{abstract}

\section{Introduction}
Recently, Parallel Stream Processing Engines (PSPEs) are emerging to process the
ever growing big streams of data generated by mobile devices,
sensors, online social networks, online financial transactions, and so on.
Representatives of modern PSPEs include Apache Storm
\cite{Toshniwal:2014:STO:2588555.2595641}, Apache S4 \cite{Neumeyer:S4} and
Google MillWheel \cite{41378}. A job in an PSPE consists of a set of operators,
and each operator is usually parallelized into a number of instances, each
processing a subset of the operator's input to increase the data throughput.
The output of the instances of upstream operators, except the sink operators,
is input to their downstream neighbors, forming a pipelined network, called the
topology of the job.


In general, there are three highly correlated job optimization problems in a PSPE.
First of all, load balancing across the processing nodes is critical to
maximizing cluster utilization and minimizing data processing latency especially
during temporary load spikes.  While load balancing in parallel computing
systems~\cite{Overeinder1996101, Ross:1991:OLB:116825.116847} has been studied
extensively in the literature, developing load balancing mechanisms in PSPEs
faces new challenges. Operator instances in PSPEs are long-standing, and
probably need to be reallocated many times during their life-time. In other
words, load rebalancing decisions have to be made continuously and periodically
to maintain satisfactory system performance. As operator instances in PSPEs may
be associated with computing states and their reallocations may involve state
migrations with significant overheads, load balancing decisions have to take
such overheads into account.


Secondly, neighboring operator instances in PSPEs continuously transfer data
between each other. Collocating them at the same node will significantly reduce
the system's workload by eliminating the overhead of cross-node data
transmission, which includes both network bandwidth
consumption~\cite{Pietzuch:2006:NOP:1129754.1129909,DBLP:journals/debu/AhmadCJZZ05}
and CPU consumption caused by data serialization and deserialization.  Moreover,
the fact that the load of an operator instance is correlated with its relative
location to its neighbors also complicates the problem of load balancing. 
Assigning a location-independent load value to each instance cannot model the
real situation and hence cannot optimize the performance. To simplify the
problem, previous studies of dynamic load balancing in PSPEs, such
as~\cite{shah:2003:flux, Xing:2005:DLD:1053724.1054117, zhou2006efficient,
7113279} largely ignore the effect of collocating operator instances.



Thirdly, with the development of cloud computing platforms and distributed
system kernels (such as Mesos~\cite{hindman2011mesos}), horizontal
scaling of an PSPE, i.e.  the ability to dynamically acquire and relinquish
computing resources at runtime, is crucial to achieve high resource utilization
and low operational cost. We argue that the problem of horizontal scaling is
tightly coupled with both load balancing and operator instance collocation. For
example, the overload of the system could possibly be rectified by collocating
operator instances that have a high communication volume between each other to
save the cost of data serialization and deserialization instead of acquiring
more resources. Another example is that horizontal scaling, load balancing and
operator instance collocation all involve state migrations, which may incur
significant overhead. Hence optimizing them integratively would result
in a solution with better load balancing and lower overhead. However, there is
a lack of study on how to perform such an integrative optimization.

To fill the gap, we revisit the load balancing problem in PSPEs and propose a
novel solution that also takes operator instance collocation and horizontal
scaling into account.  We first study the scenarios where collocating
operator instances has little effect. Such scenarios occur when each operator
instance has to transfer data to a lot of neighbors evenly.
We model this problem as a \textit{Mixed-Integer Linear Program} (MILP).
Then we extend our solution to cases where collocating operator instances can
significantly affect the system performance. The extended solution, called ALBIC
(Autonomic Load Balancing with Integrated Collocation), dynamically constrains
the MILP so that it gradually improves the collocation at runtime while ensuring
a user-defined load-balance constraint is met.



%

In
summary, the contributions of this paper include:



\begin{compactitem}
  \item We propose a simple yet effective adaptation framework that integrates
	the dynamic optimization of load balancing, operator instance allocation and
	horizontal scaling. To the best of our knowledge, this is the first work
	toward this direction.

  \item We model load balancing and horizontal scaling as an integrated
	Mixed-Integer Linear Program (MILP), and by using an LP solver, achieve a
	load distribution with much better balance than existing heuristic
	approaches.

	
  \item We propose ALBIC, which automatically detects beneficial instance
	collocations and gradually increases such collocations, while maintaining a
	good load balance.
  
\item We implement all the proposed techniques on top of Apache
  \mbox{Storm}~\cite{Toshniwal:2014:STO:2588555.2595641} and conduct extensive
  experiments using both synthetic and real datasets to examine the performance
  of the proposed approaches and compare them with the state-of-the-art
  approaches~\cite{shah:2003:flux, 7113279, khandekar2009cola}. The results show
  that our integrative approaches significantly outperform the existing ones.  

\end{compactitem}

\section{Related Work}
\label{sec:relatedWork}

\subsection{Static Scheduler}
%

SODA~\cite{wolf2008soda} adopts a static job admission algorithm based on the
capacity of the cluster and then allocates operators to the nodes, using a
simple heuristic approach. SQPR~\cite{Kalyvianaki:2011:SSQ:2004686.2005583}
optimizes query admission, operator allocation, query reuse and load balancing
as an integrative problem. FUGU~\cite{heinze2013elastic} employs load balancing
to support horizontal scaling when adding or removing jobs.

COLA~\cite{khandekar2009cola} uses a balanced graph partitioning algorithm to
put operators into a number of balanced partitions to achieve both load
balancing and minimization of cross-node communication. It first puts all
operators into one partition, and then gradually splits the partitions until a
solution with a sufficient load balance is obtained. Splitting is done by a
balanced graph partitioning algorithm, which ensures the load of each partition
is relatively even, while minimizing the data sent across different partitions.


Stanoi et al.~\cite{4288141} focus on operator allocations that maximize data
throughput. Rivetti et al.~\cite{Rivetti:2015:EKG:2675743.2771827} focus on
optimizing load balance. Lastly, several works
\cite{Xing:2006:PRL:1182635.1164194, Lei:2014:RDQ:2627748.2602138} attempt to
allocate operators such that the system becomes more resilient to
workload-fluctuations and hence the overhead of state migrations can be avoided.

\subsection{Adaptive Scheduler}

{\bf Operator placement without intra-operator parallelization.} Many early
researches assume an operator can be processed by a single processing
node, and hence they do not consider intra-operator parallelization. Xing et al.
\cite{Xing:2005:DLD:1053724.1054117} and Zhou et al.  \cite{zhou2006efficient}
have studied how to dynamically allocate operators within a cluster to achieve
load balancing. A more recent paper~\cite{Gounaris:2012:ELB:2168260.2168265}
improves the load balancing solution by using a multi-input multi-output
feedback linear quadratic regulator
based on control theory.
Jian et al. \cite{4812453} optimize 
operator allocation to minimize the total communication cost with
a large number of jobs.
Pietzuch et al. \cite{Pietzuch:2006:NOP:1129754.1129909} and Yanif et al.
\cite{DBLP:journals/debu/AhmadCJZZ05} consider operator allocation that
minimizes network usage between operators. As all the above work do not consider
intra-operator parallelization, they cannot easily be adopted for solving our
problem.

{\bf Load balancing with intra-operator parallelization.} \\
Flux~\cite{shah:2003:flux} is one of the very few early researches that consider
intra-operator parallelization. It adapts data partitioning periodically. 
At the end of each period, nodes are sorted in descending order of their
workloads, and then it moves the biggest suitable data partition at the first node to the
last one in the list, so that load-variance is decreased. If necessary,
it also moves a partition at the second node to the
second last one in the list, and so on.
Spark Streaming~\cite{Zaharia:2013:DSF:2517349.2522737} processes input as a
series of mini-batches, where a mini-batch can be processed by a task on any
node. Load balancing can be easily achieved by allocating tasks. Both these
methods do not consider collocating operator instances.

Several recent works focuses on developing partitioning functions that can
achieve load balancing without or with few state migrations. A recent
paper~\cite{gedik2014partitioning} discusses how to define a partitioning
function, which can balance load, memory and bandwidth, while ensuring changes to the
partitioning function impacts as small a subset of keys as possible, in order to
minimize the need for state migration. The method of ``The Power of Two
Choices'' (PoTC)~\cite{7113279} continuously defines two hash functions $h_1(x)$
and $h_2(x)$, such that each key $x$ can be sent to one of two alternative
downstream operator instances. Each operator instance tries to balance the
amount of work sent downstream, such that all operator instances downstream
receives an even workload. Since the state is split over two operator instances
per key, the state must be merged before the final computation can be applied.
This incurs a continuous overhead, even if no load balancing is needed for the
job. Notice that the merge step cannot be balanced, which can lead to unbalanced
workload in case the merge step is costly. Again this line of works also do not
consider collocating operator instances.

\subsection{Dynamic Scaling}
Apache Mesos \cite{hindman2011mesos} is a platform that facilitates sharing of
computational resources among the (different) systems on a cluster, which can
help maximize its utilization. 
To benefit from a platform like Mesos, a PSPE has to adaptively reconfigure its
resource usage according to its present workload.

There are multiple PSPEs~\cite{satzger:esc, 7113328, 6008710,
ElasticScalingForDataStreamProcessing} that support horizontal scaling and load
balancing, by collecting system-wide statistics over a certain timespan, then
making global decisions on the basis of these.
However, these techniques do not consider collocation of operator instances to
reduce communication overhead.

StreamCloud~\cite{Gulisano:2012:SES:2412378.2412792} supports both dynamic
scaling and load balancing, and uses a simplified and static operator
collocation method. It assumes a job is composed by relational operators,
whose semantics are known to the system. They group a stateless operator, which
can be parallelized randomly (e.g. a selection operator), with a neighboring
stateful operator, which should be parallelized by the key of the input (e.g. a
group-by aggregate operator), into a component. Then the component
will be parallelized using the key of the stateful operator. This approach
achieves collocation of some communicating instances, but not the
instances of two stateful operators. Our approach, on the other hand, assumes
the operator semantics are opaque to the system and considers collocations of
instances of any operators.



There also exist work mainly focused on dynamically calculating the number of
resources needed for a stream processing job with various goals, such as
minimizing the monetary cost of using cloud services~\cite{6008710}, and
limiting the expected processing latency~\cite{DBLP:journals/corr/FuDMWYZ15}.
Another recent paper~\cite{Li:2015:SSA:2809974.2809979} considers both latency
and throughput in a holistic manner. Based on a user specified latency constraint, their approach will determine the degree of parallelism and
granularity of scheduling for the computation, in order to satisfy the latency
constraint while achieving maximum throughput. These methods can be adopted in
our framework to calculate the amount of needed resources when making horizontal
scaling decisions.

\section{System Model}
\label{sec:preliminaries}

\textbf{Data Model.}
Input data of each operator is modelled as a number of continuous streams of
tuples in the form of $\langle key, value, ts \rangle$, where $key$ is used to
partition the operator's input stream, $value$ is content of the tuple, and
$ts$ is its timestamp. Both $key$ and $value$ are opaque to the system.

\textbf{Query Model.}
A job corresponds to a set of user-defined continuous queries. The queries can
be formulated as an operator network, which is a directed acyclic graph $\langle
O, E\rangle$, where each vertex is an operator $O_i$ and each edge is a stream,
where the direction represents the direction of data flow. The
\textit{src} operators produce inputs for the job and the \textit{sink}
operators produce no output. By allowing a sink operator being attached to any
operator, the query model supports concurrent queries and operator sharing.


\begin{table}
\center
\begin{tabular}{p{2.4cm} p{5.2cm}}
\bf{Symbol} & \bf{Description} \\
\hline
$n_i$ & Node $i$\\
$O_i$ & Operator $i$\\
$o_j$ & Operator instance $j$\\
$g_k$ & Key group $k$\\
$\sigma_k$ & Computation state of $g_k$\\
$load_i$ & Load of node $n_i$\\
$gLoad_k$ & Load of $g_k$\\
$kill_i$ & Binary variable indicating if $n_i$ is marked for removal\\
$SPL$ & Statistics Period Length\\
\end{tabular}
\caption{Symbols}
\label{tab:symboloverview}
\end{table}

\textbf{Execution Model.}
Each input tuple of $O_i$ is associated with a key and these input keys
are partitioned into a number of non-overlapping subsets, each is called a key
group and denoted as $g_k$. The main assumption is that the processing of key
groups is independent of one another. Each key group $g_k$ must therefore have
an independent processing state $\sigma_k$. 


A cluster has a set of nodes $N = \{n_1,\ldots,n_{|N|}\}$ and each node $n_i$
processes a non-overlapping subset of key groups from any operator. If key groups
$g_k$ and $g_l$ are both processed at node $n_i$, they are said to be
collocated. If a subset of key groups from operator $O_j$ is allocated at
$n_i$, we say that $n_i$ possesses an operator instance $o_i$ of $O_j$.  For
simplicity we also use $O_i$ to denote the set of instances of $O_i$.

\textbf{Processing Order.}
Obtaining a reproducible ordering of input tuples is costly, because an operator
can have several unsynchronized input streams from multiple upstream operator
instances. In this paper, we assume the system employs out-of-order processing
techniques, which means that operators eventually produce the same result for
the same input data as long as the unorderedness is within some bound
\cite{Li:2008:OPN:1453856.1453890}. Some computation relies on processing the
input in a predefined strict order, in which case one has to order the data
using an additional function before feeding the data to the actual computation,
e.g. the SUnion function \cite{Balazinska:2008:FBD:1331904.1331907}. Therefore,
our assumption of out-of-order processing does not exclude applications that
require a strict input order.

\textbf{State Migration.} State Migration is conducted using \textit{direct
state migration}~\cite{Madsen:2015:DRM:2806416.2806449}, which works as follows.
Consider moving one key group $g_k \in O_i$ from node $n_1$ to
$n_2$. Initially, all the instances of upstream operators of
$O_i$ are informed that they must redirect new tuples for key group $g_k$ to
$n_2$. All the new tuples are then buffered at $n_2$. Node $n_1$ serializes all
the data, which is necessary to move the key group $g_k$, and sends it to $n_2$.
Lastly, $n_2$ deserializes the data from $n_1$ to re-construct key group $g_k$,
and processes all the buffered data. The state migration cost is calculated by
the cost model given in the referenced paper.  All the solutions proposed in
this paper, are independent on the actual state migration technique applied,
hence alternative state migration techniques~\cite{7113328,
Madsen:2015:DRM:2806416.2806449, CastroFernandez:2013:ISO:2463676.2465282} can
be applied.

\textbf{Statistics.}
The system maintains statistics on the usages of CPU, memory and network bandwidth, as
well as the input- and output data rates of processing each key group. 
Such statistics are collected and calculated over every period $P_{i \rightarrow
j}: [T_i,T_{j}]$, where $T_i$ and $T_j$ are two wall-clock timestamps and $T_i <
T_{j}$.
The length of the period $T_j-T_i$ is a tunable parameter, which is called the
\textit{statistics period length} (SPL). A load value of a particular resource
is a percentage point in the range $[0,100]$.   

Based on the statistics, we detect the \emph{bottleneck resource} of the
computation, i.e. the one with the greatest total usage in the whole system.
The load balancing objective will use the load values of the bottleneck
resource. We define $gLoad_k$ and $load_i$ as the average load value of the
bottleneck resource over the latest $SPL$ of a key group $g_k$ and a node $n_i$
respectively.

\textbf{Workload Fluctuations.}
In this paper, we distinguish between short-term and long-term workload
fluctuations. There already exist techniques to handle short-term fluctuations,
such as data buffering, back-pressure and frequent minor adaptations
like~\cite{shah:2003:flux}. But these techniques still cannot solve all the
problems especially with very high load spikes. For example, back-pressure and
data buffering may build up a very long queue or even spill data to disks and
hence may incur excessive latency; and frequent adaptation may incur high
communication overhead and reduces the system throughput. 

As shown by the analysis in~\cite{Xing:2006:PRL:1182635.1164194}, a more
balanced long-term load distribution can be more resilient to short-term
fluctuations by reducing the probability and the degree of short-term
overloading at any particular node and can alleviate the aforementioned problems
caused by short-term spikes. This paper does not aim to replace the techniques
for handling short-term spikes, but rather focuses on ensuring a good load
balance over a relatively long period of time.


\textbf{Heterogeneity.}
It is not assumed the nodes in the cluster are homogeneous, so in order to
compare two load values, they must be multiplied with the node capacity. The
constants can be inferred at runtime, by first assuming all nodes are
homogeneous, then measuring the actual effect of state migration. Notice that
heterogeneous performance cannot be expected even for nodes of the same type,
due to their locations in a data center, network considerations and other
factors outside the control of the system.

\textbf{Controller.} The controller is a system level operator that makes global
decisions. The controller is responsible for collecting statistics from all
operator instances and making them easily accessible to the adaptation
algorithm. Furthermore, it also runs the adaptation algorithms periodically.


%


\section{Integrative Reconfiguration}

\subsection{Integrative Adaptation Framework}
The adaptation framework takes horizontal scaling, load balancing and operator
instance collocation into account. Here horizontal scaling is the ability to
dynamically scale the number of nodes in the cluster, while load balancing and
operator instance collocation are, respectively, to balance the workload over a given set of
nodes and to minimize the cost of data communication by allocating the key
groups of the operators.

\begin{algorithm}[t]
\DontPrintSemicolon 

\For{each node $n_i \in N$ that is set to be removed by the scaling algorithm in previous periods} {
	\If{$n_i.keygroups$ is \textbf{empty}} {  
		\textbf{terminate} node $n_i$\\ 
	}
}
plan $\leftarrow$ keyGroupAlloc()  \Comment{the allocation plan}\\
	\If {Scaling(plan)} {
		// Wait until new nodes are allocated\\
		plan $\leftarrow$ keyGroupAlloc() \Comment{recalc after scaling}\\
}
apply(plan) \\
\caption{Adaptation Framework}
\label{alg:adaptation_overview}
\end{algorithm}

Algorithm~\ref{alg:adaptation_overview} presents our simple yet effective
integrative adaptation framework. The adaptation is run periodically. In lines
1-3, it starts by checking all the processing nodes that are marked for removal
in the previous adaptation periods, and if all the key groups have been moved
out from these nodes, then they can be safely removed from the job.

After that, the algorithm calculates a potential allocation plan for the
operator instances (line 4), which factors in both load balancing and collocation
of operator instances. This new allocation plan will not be deployed
immediately, but will be used when making decisions about whether horizontal
scaling is needed (line 5). This is important to avoid unnecessary or
undesirable scaling because:
\begin{compactitem}
  \item A potential allocation optimizes load balancing that may solve the
	overloading problem of a processing node without scaling;
  \item Collocation of communicating operator instances may decrease the total
	load on the system so that scaling out can be avoided; 
  \item (Undesirable) Scaling-in will not be done if it is impossible to balance
	the load well enough among the remaining nodes. 
\end{compactitem}

Furthermore, after making a scaling decision in line 5, we will redo the allocation
planning algorithm again (line 7), which will make an integrative decision for
scaling, balancing and collocation. As discussed in the following subsections,
we put a constraint on the overhead of state migration within each adaptation
period. Therefore, the allocation algorithm needs to decide which operator
instances should be migrated to solve the more urgent problems. For example, it
may decide to migrate the load away from an overloaded node instead of moving
the load away from a node marked for removal by the scaling algorithm.

\subsection{Horizontal Scaling}
The scaling decision~\cite{ssdbm/MadsenTZ14,
ElasticScalingForDataStreamProcessing, DBLP:journals/corr/FuDMWYZ15} is 
determining how many nodes are needed to process the current workload. Previously
proposed algorithms, such as~\cite{ElasticScalingForDataStreamProcessing,
DBLP:journals/corr/FuDMWYZ15}, can be directly applied in our framework. As
developing a novel scaling optimizer is outside the scope of this paper and the
actual algorithm would not affect the conclusion of this paper, we
assume the use of existing techniques for calculating the number of needed nodes.
In addition, we do not assume a particular way to add new nodes to a job, which
can be done by starting new instances in Amazon EC2, waking up nodes that were put into
sleep for energy saving, or reallocating resources from other jobs to this one.

\subsection{Key Group Allocation}
\label{subsec:rebalance}



\subsubsection{LP Solver}
\label{subsubsec:lp_solver}
The following solution can be applied to situations where there is little
opportunity to minimize communication cost by collocating key groups. To
analyze different situations, we consider four common partitioning
patterns,
which are similar to those proposed in~\cite{Zhou:2012:APT:2213836.2213839}:

\begin{figure}[tbh] \centering
\includegraphics[width=235px]{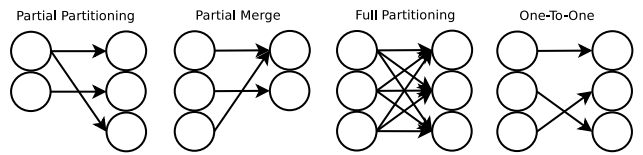}
\caption{Common Partitioning Patterns}
\label{fig:partitioning_situations}
\end{figure}

\begin{compactitem}

  \item Partial Merge: Each operator instance outputs all tuples to one
  downstream operator instance. 
    
  \item Partial Partitioning: Each operator instance outputs tuples to a
  subset of downstream operator instances.

  \item One-To-One: Each operator instance outputs tuples to exactly one target
  operator instance, and each target instance receives tuples only from one
  upstream operator instance. 

  \item Full Partitioning: Each operator instance outputs tuples to all
  downstream operator instances. This is a special case of Partial Partitioning.


\end{compactitem}

\begin{table}[thb]
\center
\begin{tabular}{l p{6.4cm}}
\bf{Symbol} & \bf{Description} \\
\hline
$A$ & Nodes are not marked for removal\\
$B$ & Nodes are marked for removal by the scaling algorithm\\
$G$ & All the key groups\\
$q_{i,k}$ & binary variable indicating if $g_k$ is currently allocated to $n_i$\\
$x_{i,k}$ & binary variable indicating if $g_k$ is allocated to $n_i$ in the
new solution \\
$mean$ & $\lceil \frac{1}{|A|} \cdot \sum_{n_i \in N} load_i \rceil$, i.e. the
average load\\
$d$ & Maximum load deviation from $mean$ $\forall n_i \in A$\\
$d-d_u$ & Maximum upper load deviation from $mean$\\
$d-d_l$ & Maximum lower load deviation from $mean$\\
$w_1,w_2$ & Weights in the objective function\\
$mc_k$ & Migration cost for $g_k$\\
\end{tabular}
\caption{MILP Symbols}
\label{tab:symboloverviewMILP}
\end{table}

The LP Solver described in this section, is thus suitable for topologies
exhibiting the two partial patterns with high degrees (including Full
Partitioning patterns) when the amount of data transmitted from an instance to
different downstream instances are not skewed. In such cases, there is limited
benefit of collocating key groups, because data are evenly sent to many
downstream instances.


%

{\bf Metric.} To measure load imbalance, we define a metric called
\textit{load distance}, which is the largest difference between the exhibited
load of any node in the cluster and the average load in the cluster. We would
like to find a load balancing solution that minimizes the load distance, because
this is the state where the system can tolerate maximum load fluctuations at any
node, without exhibiting under- or overload. 

Let $B$ be the set of nodes that are marked for removal by the horizontal
scaling algorithm and $A$ be the rest of the processing nodes such that $N = A
\cup B$.  Then the average load, $mean$, is defined as $\lceil \frac{1}{|A|} \cdot
\sum_{n_i \in N} load_i \rceil$, where $load_{i}$ is the load of $n_i$. The
objective can be formally stated below:

\textbf{Objective:} \textit{Minimize $\max_{n_i \in A}$ $|load_{i} - mean|$ and
\\ $\sum_{n_i \in B} (load_{i})$ s.t. the cost of migration $\leq$ maxMigrCost.}

It is important to bound the cost of rebalancing, as the solution could
otherwise end up being very costly to apply within each adaptation round and
hence incur excessive processing latency. We model the cost of migrating a
keygroup $g_k$ as $mc_k = \alpha \cdot |\sigma_k|$, where $|\sigma_k|$ is the
size of the state of keygroup $g_k$ and $\alpha$ is a constant, chosen such that
$mc_k$ is the time to serialize the state on a node with average load. Our
techniques are largely independent on the cost-model chosen, and 
the cost-model can be chosen according to the actually employed state migration
technique. As the maximum migration cost is bounded, horizontal scale-in will
not necessarily be done within one adaptation round, but instead will gradually
migrate key groups from $B$ to $A$.

The minimization problem is NP-hard, because it is an instance of the
\textit{Multi-Resource Generalized Assignment Problem} \cite{NAV:NAV1029}. To
derive a solution, we model the minimization problem as a \textit{Mixed-Integer
Linear Program} as follows.

\textbf{Variables.}
Table~\ref{tab:symboloverviewMILP} summarizes the symbols. The current
allocation of key groups to nodes are represented by a set of binary variables
$q_{i,k}$ with a value of $1$ if key group $g_k$ is located at node $n_i$ or $0$
otherwise. The new allocation of key groups is denoted by the binary variables
$x_{i,k}$, which is defined in a similar way as $q_{i,k}$. A node $n_i$ is
marked for removal if $kill_i=1$. 
The variables $d$, $d_u$ and $d_l$ model the load deviations from $mean$. 



\textbf{Mixed-Integer Linear Program}
\begin{equation*}
\begin{aligned} & \text{min} & & w_1\cdot d - w_2(d_u + d_l)\\ 
& \textit{s.t.} & &  \\
& (1) & & \forall_{g_k \in G} : \sum_{n_i \in N} x_{i,k} = 1 \\
& (2) & & \sum_{n_i \in N} \sum_{g_k \in G} [(1-q_{i,k}) \cdot x_{i,k} \cdot mc_k] \leq maxMigrCost \\
& (3) & & \forall_{n_i \in N} : \sum_{g_k \in G} (x_{i,k} \cdot
gLoad_k) \leq mean + (d - d_u) \\
& (4) & & \forall_{n_i \in N \wedge kill_i=0} : \sum_{g_k \in G} (x_{i,k} \cdot gLoad_k) \geq mean - (d - d_l) \\
& (5) & & mean - d \ge 0\\
\end{aligned}
\end{equation*}

\textbf{Constraints.} Constraint (1) ensures that each key group is allocated to
exactly one node. Constraint (2) bounds the maximum cost of state migration.
Constraint (3) bounds the maximum load per node. Constraint (4) bounds the
minimum load per node, and is effectively disabled for nodes marked for removal.
Constraint (5) is needed to ensure that $d$ does not exceed the lower bound of
load.

\textbf{Explanation.}
The MILP minimizes the load distance, with the help of three variables and two
constraints. Constraints (3) and (4) define the maximum and minimum load of
each node in the cluster, where the limits are influenced by the variables $d$,
$d_u$ and $d_l$, such that the difference between maximum and minimum load can
be minimized by choosing appropriate values of the variables.

The variable $d$ represents the maximum load deviation from $mean$ in all the
nodes in $A$. By minimizing the value of $d$, it is guaranteed that at least the
upper or the lower bound of load on all nodes will be tight. In order to make
both the upper and lower bounds tight, we introduce the variables $d_u, d_l \in
\mathbb{R}$, such that $d_u$ (or $d_l$) is chosen to be greater than zero to
tighten the upper (or lower) bound.

The solution thus depends on the variable $d$ being minimized first and then
secondarily $d_u$ and $d_l$ being maximized. To achieve this, the constants
$w_1$ and $w_2$ in the objective function should be
chosen such that $w_1 >> w_2$.

\textbf{Extending to Multi-Dimensional Load.}
For ease of presentation, the above formulation only considers 1-dimen-sional
load value, i.e.  the usage of the bottleneck resource. This may be sufficient
for many computations where the usage of different resources are correlated,
e.g.  higher data inputs usually mean higher usage of CPU (due to more
computations and more data serialization and deserialization), memory and
network bandwidth.  Therefore, balancing the usage of the bottleneck resource
may also bring a good distribution of the usage of other resources. If this is
not the case, then thanks to the flexibility of the MILP model, it can be easily
extended to add a constraint on the maximum usage of each non-bottleneck
resource on each node.

\textbf{Extending to Heterogeneous Nodes.}
Heterogeneity is supported by multiplying the $gLoad_k$ constant (constraints 3
and 4) by a node capacity weight, thus forming a new constant. Changing the
values of constants will change the corner points of the solution and not the
performance/quality (e.g. using Simplex).

\textbf{Supporting Horizontal Scale-In.}
When the horizontal scaling algorithm decides to scale-in, it marks a set of
nodes, $B$, for removal and the load balance algorithm should migrate key groups
away from the nodes to be removed when appropriate.
A careful reader may notice that our MILP will primarily minimize load distance
$d$, and we do not explicitly give higher priority to the migrations of key
groups from $B$ to $A$. Here we will prove that solving our MILP will eventually
move all the key groups from $B$ to $A$.


%
\newtheorem{lemma}{Lemma}

\begin{lemma}No key group will be migrated from $A$ to $B$.\end{lemma}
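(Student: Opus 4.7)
The plan is to argue by contradiction via an exchange argument. Suppose $x^*$ is an optimal MILP solution in which some key group $g_k$ satisfies $q_{i,k}=1$ for some $n_i\in A$ and $x^*_{j,k}=1$ for some $n_j\in B$. Write $\ell^*_r := \sum_{g_m\in G} x^*_{r,m}\cdot gLoad_m$ for the load of node $n_r$ under $x^*$, and $L^*_B := \sum_{n_r\in B}\ell^*_r$. I will construct an alternative feasible solution $x'$ that strictly improves the objective $w_1 d - w_2(d_u+d_l)$, contradicting the optimality of $x^*$.

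The first step is to derive a lower bound on $d$ in terms of $L^*_B$. Because the total workload is conserved, $\sum_{n_r\in A}\ell^*_r = |A|\cdot mean - L^*_B$ (up to the ceiling in the definition of $mean$). Summing constraint (4) over $n_r\in A$ yields
\[
|A|\cdot mean - L^*_B \;\ge\; |A|\bigl(mean - (d^* - d_l^*)\bigr),
\]
so $d^* - d_l^* \ge L^*_B/|A|$. Since $d_l^*\ge 0$, any reduction in $L_B$ loosens this bound and permits a strictly smaller $d$, which strictly improves the objective because $w_1 \gg w_2$.

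The second step is to construct $x'$ from $x^*$ by setting $x'_{j,k}=0$, $x'_{i,k}=1$, and leaving all other variables unchanged. I would verify each constraint in turn: (1) holds because $g_k$'s total assignment remains one; (2) strictly improves because the term $(1-q_{j,k})\cdot x_{j,k}\cdot mc_k = mc_k$ disappears and no new migration is introduced; (5) refers only to constants. Constraint (4) on $n_i$ is strengthened since its load only increases, and $n_j\in B$ is exempt from (4). The only non-trivial check is constraint (3) on $n_i$, i.e.\ $\ell^*_i + gLoad_k \le mean + (d^* - d_u^*)$. When this holds, $x'$ is feasible at the same $(d^*,d_u^*,d_l^*)$, and by the first step the new workload $L'_B = L^*_B - gLoad_k$ strictly slackens the bound $d - d_l \ge L_B/|A|$, allowing a strict decrease of $d$ (or increase of $d_l$), which contradicts the optimality of $x^*$.

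The main obstacle I expect is the subcase where $n_i$ alone cannot reabsorb $g_k$. Here I would argue via a slack-accounting observation: the total room on $A$ under the cap $mean + (d^* - d_u^*)$ equals $|A|(d^* - d_u^*) + L^*_B \ge gLoad_k$, so some $n_l\in A$ can host $g_k$, possibly after a small swap with another key group whose load is at most $n_l$'s slack. The delicate point is atomicity of key groups, since the slack on $A$ need not be concentrated at a single node; I would handle this by showing that if neither a direct relocation nor such a short swap existed, constraints (3) and (4) taken together would already force $L^*_B = 0$, making the original assumption impossible and closing the argument.
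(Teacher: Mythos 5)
Your overall strategy (an exchange argument against an optimal solution, seeded by the aggregate bound $d - d_l \ge L_B^*/|A|$, which is essentially the computation in the paper's Lemma~2) is more principled than the paper's own proof, which is a four-node case analysis on load deviations. However, there is a genuine gap exactly where you need a \emph{strict} improvement. After the exchange you obtain a solution that is feasible at the same $(d^*,d_u^*,d_l^*)$, i.e.\ with the same objective value; you then claim that because $L_B$ has decreased, the bound $d-d_l \ge L_B/|A|$ is slackened and $d$ can strictly decrease (or $d_l$ increase). That inference is invalid: the aggregate bound is only a necessary condition, and the quantities $d-d_l$ and $d-d_u$ are actually pinned by the pointwise constraints, namely constraint~(4) at the least-loaded node of $A$ and constraint~(3) at the most-loaded node. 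Returning $g_k$ to $n_i$ changes neither of these unless $n_i$ happens to be that extremal node, so in general the exchange yields an equally good solution, not a strictly better one, and no contradiction follows --- at best you conclude that \emph{some} optimal solution avoids $A\to B$ migrations, which is weaker than the lemma. Worse, since constraint~(3) ranges over all of $N$ while constraint~(4) exempts $B$, parking load on a $B$-node can strictly \emph{increase} the attainable $d_u$: with $A=\{n_1,n_2\}$, $B=\{n_3\}$, loads $110$ and $90$, $mean=100$, and an atomic key group of load $20$ on $n_1$, every destination yields $d=10$, but sending the group to $n_3$ leaves both $A$-nodes at $90$ and permits $d_u>0$, so the MILP as written prefers $B$. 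This shows the argument cannot go through without the divisible-load/small-key-group regime that the paper's proof tacitly assumes (its decrement $\min(gLoad_k,\,j-k,\,h-i)$ is only correct when $gLoad_k$ is small relative to the imbalances); your proposal never invokes such an assumption.

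The second gap is the fallback for the subcase where $n_i$ cannot reabsorb $g_k$ under constraint~(3). Aggregate slack $|A|(d^*-d_u^*)+L_B^*\ge gLoad_k$ does not produce a single node with slack $gLoad_k$ (precisely the atomicity problem you flag); the ``small swap'' you propose consumes migration budget and may therefore violate constraint~(2), which you do not account for; and the closing claim that failure of both options forces $L_B^*=0$ is asserted rather than derived --- summing constraint~(3) over $A$ only yields $d^*-d_u^* < gLoad_k - L_B^*/|A|$, which is perfectly consistent with $L_B^*>0$. To repair the proof you would need to (i) work in the regime where key-group loads are small compared to node imbalances, so that a receiving node in $A$ with sufficient slack always exists and the exchange strictly reduces either the maximal overload or the maximal underload within $A$, and (ii) drive the strict improvement through those pointwise constraints rather than through the aggregate bound.
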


\begin{proof}
We first define how to choose four nodes $n_1$ to $n_4$ $\in A$ from an
arbitrary cluster of $N$ nodes. Without loss of generality, we assume $load_1 > 
load_2 > load_3> load_4$ and $\forall_{i \not \in\{1,2,3,4\}}
load_2>load_i>load_3$. The load of each node is modelled as follows: $load_1 =
mean + h$, $load_2 = mean + i$, $load_3 = mean - k$ and $load_4 = mean - j$.
Lastly, the key group $g_t$ to move is located on node $n_1$ and key group $g_t$
has load $gLoad_t$. The proof proceeds by showing that migrating $g_t$ to or
within $A$ is always preferred.

Case 1, $j \ge h$:
If migrating $g_t$ to $B$, the resulting value of $d = j$, remains unchanged. In
other words, this operation cannot decrease the value of $d$.
Consider now migrating $g_t$ within $A$. The value of $d$ is decreased by
min($gLoad_k$, $j-k$, $h-i$), which is always greater than zero. To see why $d$
is decreased by this amount, see that the statement actually considers three
aspects: (1) size of migrated load, (2) maximum possible reduction in underload
and (3) maximum possible reduction in overload. Taking the minimum of these
three, actually gives the bounding one, which is how $d$ is defined.

Case 2, $j < h$:
If migrating $g_t$ to $B$, the value of $d$ is decreased by min($gLoad_k$,
$h-j$, $h-i$). By migrating $g_t$ within $A$, $d$ is instead decreased by
min($gLoad_k$, $j-k$, $h-i$). Both calculations considers three aspects: (1)
size of migrated load, (2) maximum possible reduction in underload and (3)
maximum possible reduction in overload, as before.

First we argue that $j-k > h-j$, because this proves that $d$ is decreased by at
least as much when moving $g_t$ within $A$, compared to moving $g_t$ to $B$.
$j-k > h-j$ $\leftrightarrow$ $h-k > h-j$ $\leftrightarrow$ $-k > -j$
$\leftrightarrow$ $k < j$, therefore the statement holds.

As $d$ can be decreased by the same amount by moving $g_t$ within $A$ or to $B$,
we now need to prove that migrating within $A$ is still the preferred choice in
this case. First, see that the variable $d_u$ (as defined in MILP) is changed by
the same value independent of whether $g_t$ is moved within $A$ or to $B$, since
the reduction in overload is the same. Secondarily, consider if migrating $g_t$
to $B$, the value of $d_l$ (as defined in MILP) is unchanged as no underloaded
node in $A$ receives any load. Lastly, see that if migrating $g_t$ within $A$,
the value of $d_l$ is increased by min($gLoad_k$, $j-k$). Since $d_u + d_l$ is
therefore increased by moving $g_t$ within $A$, and since $d_u$ and $d_l$ are
secondarily maximized by the MILP, it is preferred to move $g_t$ within $A$.

We now consider $|A| < 4$. The case when $A$ contains only one node is trivial,
as $g_t$ will only be moved to $B$ when the node is overloaded, which by
definition is impossible. If $|A| = 3$, then either $i = 0$ or $k = 0$ and when
$|A|=2$, then $i=k=0$. By using these conditions, one can easily prove that the
lemma holds for both cases. To keep the proof succint, we ommit the details.
\end{proof}

\begin{lemma}
  The minimum value of $d$ can only be achieved by migrating all key groups
from nodes in $B$ to nodes in $A$.
\end{lemma}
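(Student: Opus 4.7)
The plan is to derive a pigeonhole lower bound $d \geq L_B/|A|$ on every feasible solution of the MILP, where $L_B = \sum_{n_i \in B} load_i$ denotes the load still carried by nodes marked for removal. Since $L_B = 0$ allows the entire workload to be redistributed over the $|A|$ nodes of $A$ and drive $d$ down to (essentially) zero, while any $L_B > 0$ forces $d$ strictly above zero, no minimizer of $d$ can retain load on $B$. Concretely, the proof has two halves: a structural lower bound via averaging on $A$, and an achievability argument for $B$-empty configurations.

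The main step is the pigeonhole. Since $mean \geq (\sum_{n_i \in N} load_i)/|A|$ by definition, the total load on $A$ is at most $|A|\cdot mean - L_B$ and the average over the $|A|$ nodes of $A$ is therefore at most $mean - L_B/|A|$. Hence some $n^\ast \in A$ satisfies $load_{n^\ast} \leq mean - L_B/|A|$. Because $kill_{n^\ast}=0$, constraint~(4) applies at $n^\ast$ and gives $mean - (d - d_l) \leq load_{n^\ast} \leq mean - L_B/|A|$, which rearranges to $d - d_l \geq L_B/|A|$. At the MILP's optimum the LP drives $d_l$ up to its largest feasible value $d - L$, where $L = \max_{n_i \in A}(mean - load_i) \geq 0$ (the same averaging observation rules out $A$ being uniformly above $mean$). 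Therefore $d_l \geq 0$ at the optimum, and the inequality collapses to $d \geq L_B/|A|$.

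To close the argument I would show that $B$-empty configurations attain a strictly smaller $d$: with $L_B=0$ the entire workload of at most $|A|\cdot mean$ lives on $A$, and the MILP can redistribute it so that every node is within a small gap of $mean$ — exactly zero in the idealized continuous case, and otherwise $O(1)$ due to the ceiling in the definition of $mean$ and to key-group granularity — in particular strictly below $L_B/|A|$ for any positive $L_B$ attainable by a $B$-nonempty feasible solution. Together with the pigeonhole bound this gives the lemma. The hardest part of the plan is the achievability step: when individual key groups are large relative to $L_B/|A|$ the balancing residual inside $A$ may itself be nontrivial, so I would back the achievability up with an exchange argument in the spirit of Lemma~1 — picking a key group on some $n_b \in B$, moving it to the underloaded witness $n^\ast$ in $A$, and verifying via a small case analysis (on whether $gLoad_t$ keeps $n^\ast$ below $mean$ or not) that $d$ never increases, so that iterating the exchange reduces $L_B$ monotonically toward $0$ without worsening the objective.
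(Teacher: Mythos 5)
Your main step is the same one the paper uses: an averaging/pigeonhole bound over $A$ showing that any feasible allocation leaving total load $L_B$ on $B$ forces $d \ge L_B/|A|$, versus a lower bound of $0$ once everything is moved off $B$. The paper states this in two lines (``follows directly from the definition of $d$''); you derive it properly through constraint~(4) and the sign of $d_l$, which is a genuine improvement in rigor, and you correctly exploit the fact that $mean$ divides the total load by $|A|$ rather than $|N|$ --- that is exactly what makes the bound work.

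Where you part company with the paper is the achievability half, and you are right that the paper does not have one: it only shows the lower bound drops from $L_B/|A|$ to $0$, not that a $B$-empty configuration actually attains a smaller $d$ than every $B$-nonempty one. Your instinct to flag this is good, but the exchange argument you sketch does not close the gap. Moving a key group $g_t$ from some $n_b \in B$ to the underloaded witness $n^\ast \in A$ can overshoot: if $gLoad_t$ is large relative to $L_B/|A|$ (say a single key group carrying all of $L_B$, with every node of $A$ sitting at $mean - L_B/|A|$), the receiving node ends up with upper deviation $L_B(|A|-1)/|A|$, which exceeds the $L_B/|A|$ you started from whenever $|A|>2$, so $d$ strictly increases and the iteration does not converge to a better solution. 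In that regime the lemma itself fails for indivisible key groups --- leaving the large key group on $B$ is optimal --- so no exchange argument can rescue it without an added hypothesis that individual key-group loads are small compared to $L_B/|A|$. That fine-granularity idealization is also what the paper's two-line proof silently assumes. In short: your lower-bound half is correct and matches the paper; your achievability half identifies a real omission in the paper but cannot be completed as proposed without making the granularity assumption explicit.
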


\begin{proof}
We denote the sum of loads on the nodes in $A$ and $B$ as $L_A$ and $L_B$,
respectively, and $L_N=L_A + L_B$. Consider not moving any key group from nodes
in $B$ to those in $A$. The value of $d \ge \frac{L_N}{|A|} - \frac{L_A}{|A|} =
\frac{L_B}{|A|}$, which follows directly from the definition of $d$ (constraints
3 and 4). Now consider allowing the migration of key groups from nodes in $B$ to
nodes in $A$. The value of $d \ge \frac{L_N}{|A|} - \frac{L_A}{|A|} -
\frac{L_B}{|A|} = 0$. This again follows directly from the definition of $d$.
\end{proof}

\subsubsection{Autonomic Load Balancing with Integrated Collocation}
Data communication between key groups on separate processing nodes consumes not
only bandwidth, but also CPU, as it requires data serialization and
deserialization. Collocating key groups that does a significant amount of
communication can therefore save both bandwidth and CPU, and thus reduce the
system load. 

Topologies exhibiting extensive One-To-One partitioning patterns, 
the two partial patterns with low-degrees, or even the two partial patterns with
high-degrees (including full partitioning) but high skewness, could have
abundant opportunities to minimize communications by collocating key groups.
Note that we assume the computation of each operator is opaque to the system and
we cannot deduce the relations between the input key and output key of an
operator. Therefore, we cannot perform a pre-analysis of the communication
patterns in the topology and produce an optimized collocate plan statically as
done in StreamCloud~\cite{Gulisano:2012:SES:2412378.2412792}. In other words, we
have to dynamically detect the communication pattern and its changes over time,
and then optimize the collocation plan dynamically. 


The above MILP cannot be simply extended to model the collocation of key groups,
as detecting if two key groups are collocated needs a quadratic formulation. For
efficiency reasons, we avoid quadratic models, as they are computationally much
more expensive to solve than linear models. Therefore we propose a heuristic
algorithm to run on top of the MILP, called ALBIC (Autonomic Load Balancing with
Integrated Collocation).  ALBIC will dynamically measure the benefit of
collocating each pair of key groups. For each invocation, ALBIC optimistically
collocates one set of key groups with maximum benefit. During dynamic load
balancing, the collocated key groups would be considered for migration as
indivisible units.  If the load of a set of collocated key groups becomes too
large, it will be split into relatively even-sized partitions. In this way,
ALBIC optimizes the key group collocation integratively with horizontal scaling
and load balancing.


%

\begin{algorithm}
\DontPrintSemicolon 
\KwIn{maxLD (max load distance, default = 10) ~~~~~~~~~~~~~ maxPL (max partition load,
initial = 25) ~~~~~~~~~~~~~ stepPL (change in partition load, default = 5) 
~~~~~~~~~~~~~~~~~~~ sF (score factor, default = 1.5)}
\textbf{// Calculate scores (step 1)}\\
\For {each operator $O$} {
	\For {each keygroup $g_k$ $\in$ O.keygroups} {
		output $\leftarrow \sum_{DO \in O.down} \sum_{g_j \in DO.kgs} out[g_k][g_j]$\\
		avg $\leftarrow output / \sum_{DO \in O.down} |DO.keygroups|$\\
		\For {each operator $DO$ $\in$ O.downstream} {
			\For {each key group $g_j$ $\in$ DO.keygroups} {
				\If {out[$g_k$][$g_j$] $>$ avg $\cdot sF$} {
					\If {$g_k$, $g_j$ are collocated} {
						add $g_k, g_j$ to \textit{colGrps}\\
					} \Else {
						add $g_k, g_j$ to \textit{toBeColGrps}\\
					}
				} 
			}
		}
	}
} 

\textbf{// Maintain collocation (step 2)}\\
sets $\leftarrow$ calcSets(colGrps)\\
\For {each set $S \in$ sets} {
	$p_1 \leftarrow$ $\lceil \sum_{g_k \in S} (g_k.migrCost) / maxMigrCost \rceil$\\
	$p_2 \leftarrow$ $\lceil \sum_{g_k \in S} (gLoad_k) / maxPL \rceil$\\
	 \For {each set $P \in$ graphPart(S, max($p_1,p_2$))} {
	// contraint: $P$ is migrated as a unit\\
	partitions $\leftarrow$ partitions + P\\
	}
}

\textbf{// Improve collocation (step 3)}\\
$g_i, g_j \leftarrow$ random from \textit{uncolGrps} w. max value\\
$n_1 \leftarrow$ keygrpToNode($g_i$), $n_2 \leftarrow$ keygrpToNode($g_j$)\\
\If {$g_i, g_j \notin$ partitions} {
	Add constraint: $g_i,g_j$ on node w. load=$min(l_1,l_2)$
} \ElseIf {$g_i \in$ partitions \textbf{AND} $g_j \notin$ partitions} {
	Add constraint: $g_i,g_j$ on node $n_1$\\
} \ElseIf {$g_i \notin$ partitions \textbf{AND} $g_j \in$ partitions} {
	Add constraint: $g_i,g_j$ on node $n_2$\\
} \ElseIf {$g_i, g_j \in$ partitions} {
	$p_1 \leftarrow$ getPartition($g_i$), $p_2 \leftarrow$ getPartition($g_j$)\\
	Add constraint: $p_1,p_2$ on node w. load=$min(l_1,l_2)$
}

\textbf{// Solve (step 4)}\\
solution $\leftarrow$ lp-solver(this)\\
\If {calcLoadDistance(solution) $>$ maxLD} {
	\textbf{return} albic(maxLD, maxPL-stepPL, stepPL, sF)\\
}
\textbf{return} solution\\

\caption{ALBIC}
\label{alg:albic}
\end{algorithm}

\textbf{Step 1 - Calculate Scores.}
Let $out(g_i)$ be the total data rate sent from a key group $g_i$ within
the timespan SPL and let $out(g_i, g_j)$ be data rate sent from key
group $g_i$ to $g_j$. 
To decide if a keygroup pair $g_i, g_j$ can contribute to the overall
collocation, the value of $out(g_i,g_j)$ should exceed a threshold value $avg(g_i)
\cdot sF$, where the $avg(g_i)$ is defined as the total number of tuples sent downstream from
$g_i$ divided by total number of downstream key groups and $sF$ is a score
factor. Setting $sF=1$ means key groups sending more than average to each other
will be considered. Setting $sF=2$ means key groups sending more than twice the
average to each other will be considered, and so on.


\textbf{Step 2 - Maintain Collocation.}
ALBIC merges all existing collocated key group pairs into a minimum number of
sets, such that any pair of sets whose intersection is not empty, will be
replaced by the union of those two sets. A set cannot be too large, as that
hinders good load balancing and state migration. To overcome this problem, each
set is split into a number of partitions, attempting to ensure that (1) the
migration cost of a partition $Pmc$ is less than \textit{maxMigrCost} and (2)
the maximum load $PL$ of any partition is less than $maxPL$. ALBIC uses a graph
model, where each key group is modelled as a vertex and each edge has $weight =
out(g_i,g_j)$. The weight of the vertex of $g_i$ is set as the $mc_i$, its
migration cost, if $Pmc/maxMigrCost > PL/maxPL$. Otherwise it is set as
$gLoad_i$, the load of $g_i$. Ties are broken randomly.
Balanced graph partitioning \cite{Karypis:1998:MAM:509058.509086} is then
applied to generate balanced partitions with minimum inter-partition weighted
edge-cuts. It may need to be applied again on some partitions if they still
violate one of the two constraints.

\begin{table}[t]
\center
\begin{tabular}{p{1.2cm} p{6.4cm}}
\bf{Symbol} & \bf{Description} \\
\hline
$out(g_i)$ & The output rate of $g_i$\\
$avg(g_i)$ & $out(g_i)$ divided by \#downstream key groups \\
$out(g_i,g_j)$ & The rate of data sent from $g_i$ to $g_j$\\
$Pmc$ & Migration cost for unspecified partition\\
$maxLD$ & Maximum load distance\\
$maxPL$ & Maximum load of any partition\\
$stepPL$ & Decrease in $maxPL$ when recalculating\\
$sF$ & Score factor\\
\end{tabular}
\caption{ALBIC Symbols}
\label{tab:symboloverviewALBIC}
\end{table}



\textbf{Step 3 - Improve Collocation.}
Select one random key group pair $g_i, g_j$ from the set
\textit{toBeColGrps} with the maximum value of $out(g_i,g_j)$. Let $n_1$
and $n_2$ be the node where $g_i$ and $g_j$ are currently allocated
respectively. There are three cases:

\textit{Case 1: neither $g_i$ or $g_j$ is part of any set of collocated
key groups}. This case is handled by adding a constraint to the MILP to
collocate $g_i, g_j$ on the node $n_1$ or $n_2$ with the smallest load.

\textit{Case 2: either $g_i$ or $g_j$ is part of a partition of collocated
key groups}. This case is handled by adding a constraint to the MILP to collocate
$g_i, g_j$ on the node, where the key group that is part of a set of
collocated key groups is located.
 
\textit{Case 3: $g_i$ and $g_j$ belong respective sets of collocated
key groups}. This case is handled similarly as Case 1.

ALBIC always tries to collocate one pair of key groups with the largest amount
of communication to reduce the cost of serialization and deserialization. 

\textbf{Step 4 - Solving.}
The constrained MILP is solved and the load distance of the resulting allocation
is calculated. If the load distance is larger than $maxLD$ (the user-defined
maximum load distance), the problem is resolved by forming smaller (more)
partitions by reducing the $maxPL$ (max partition load) parameter.
Notice that when $maxPL = 0$ there must be exactly one partition per key group,
in which case ALBIC simply solves the pure MILP, without considering collocation
at all.

In the last part of this description, we provide a discussion of the default
values of arguments to ALBIC. As there is a trade-off between the value of
$maxLD$ and the collocation which can be obtained, we use a default value of
$maxLD=10$, which guarantees a reasonable load distance without impacting the
obtainable collocation unnecessary. ALBIC achieves a much lower load distance
than $maxLD$ for all experiments conducted. The initial value of $maxPL$ is
25\%, which makes it very unlikely this initial value impacts the obtainable
collocation while ensuring no partition gets very large load, for load balancing
purposes. If the constraint of $maxLD$ is violated, the value of $maxPL$ is
gradually decreased by $stepPL$ and ALBIC will therefore use more and more
partitions to produce a solution respecting the constraint of $maxLD$.
The default value of $stepPL=5$ is chosen such that the solution will not need
to be recalculated too many times (max five times). In our experiments, 
it is very rare that any recalculation of ALBIC is needed.

\enlargethispage{\baselineskip}
\section{Experiments}

\begin{figure*}[h!t]
\centering
\begin{minipage}{4.38cm}
	\centering
	\includegraphics[width=110px]{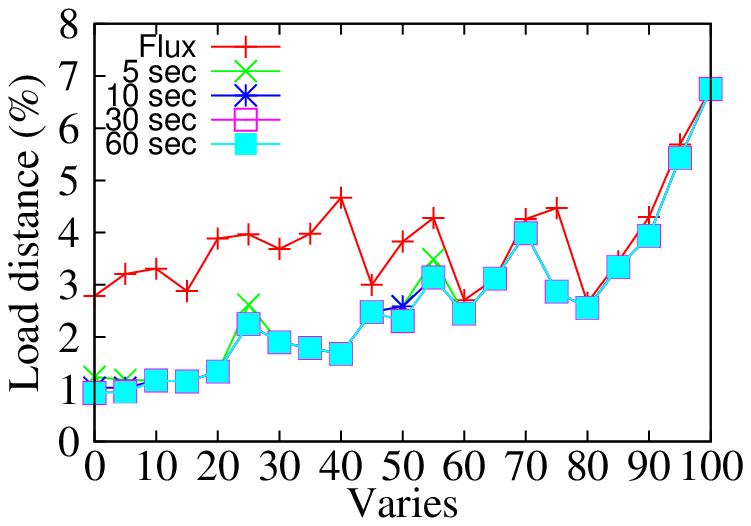}
	MaxMigrations = 10
\end{minipage}%
\begin{minipage}{4.38cm}
	\centering
	\includegraphics[width=110px]{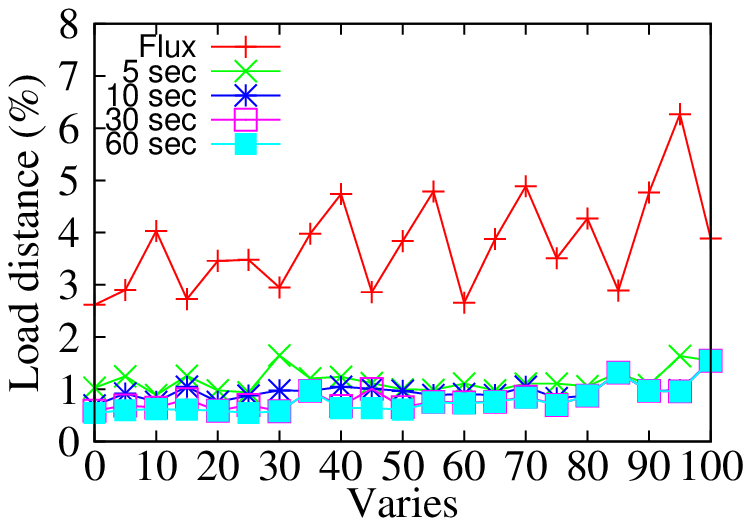}
	MaxMigrations = 20
\end{minipage}
\begin{minipage}{4.38cm}
	\centering
	\includegraphics[width=110px]{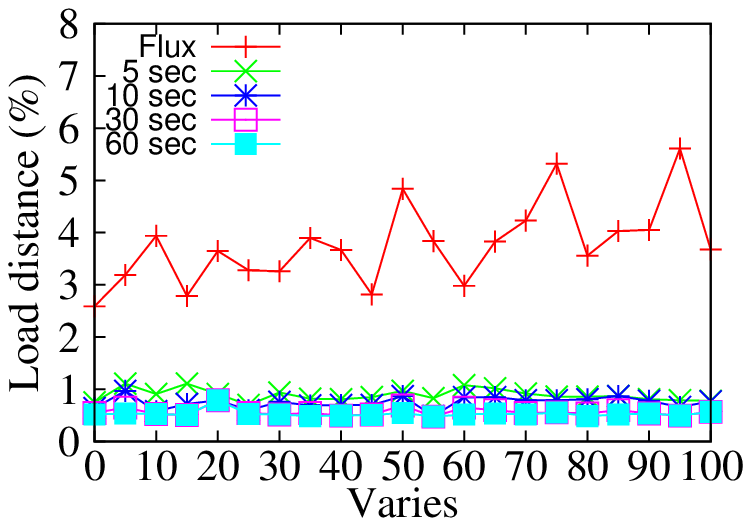}
	MaxMigrations = 30
\end{minipage}
\begin{minipage}{4.38cm}
	\centering
	\includegraphics[width=110px]{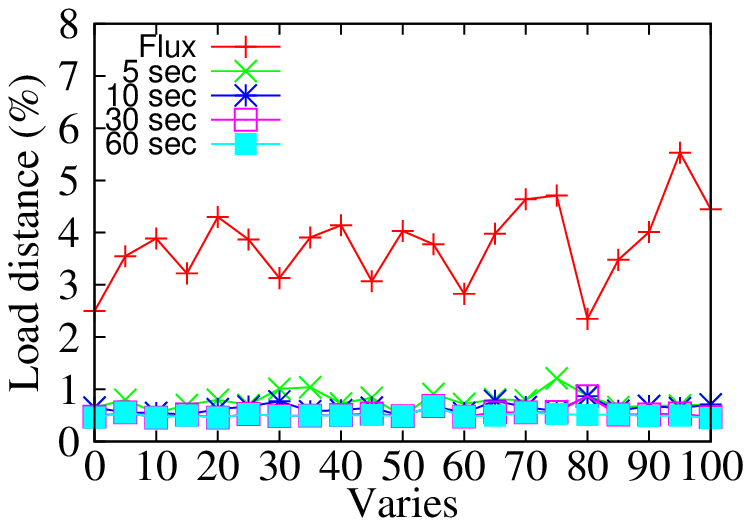}
	MaxMigrations = 40
\end{minipage}
\caption{Experimental Setting: 20 nodes, 400 key groups, 10 operators}
\label{eval:milp_performance_20nodes}
\end{figure*}

\begin{figure*}[h!t]
\centering
\begin{minipage}{4.38cm}
	\centering
	\includegraphics[width=110px]{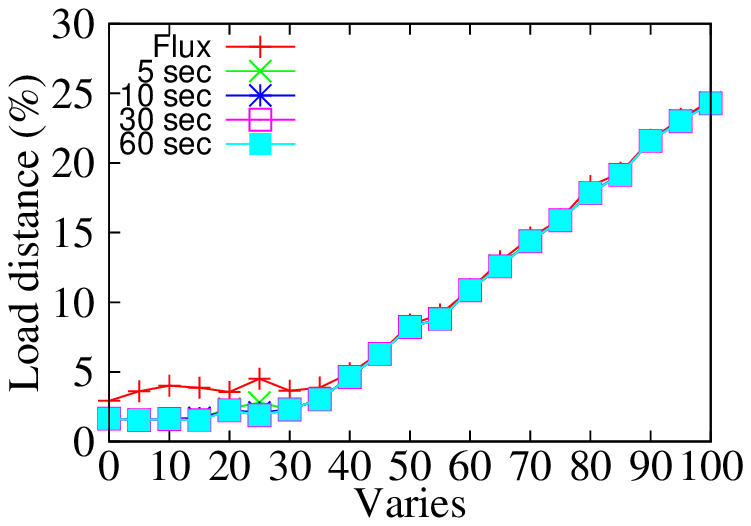}
	MaxMigrations = 10
\end{minipage}%
\begin{minipage}{4.38cm}
	\centering
	\includegraphics[width=110px]{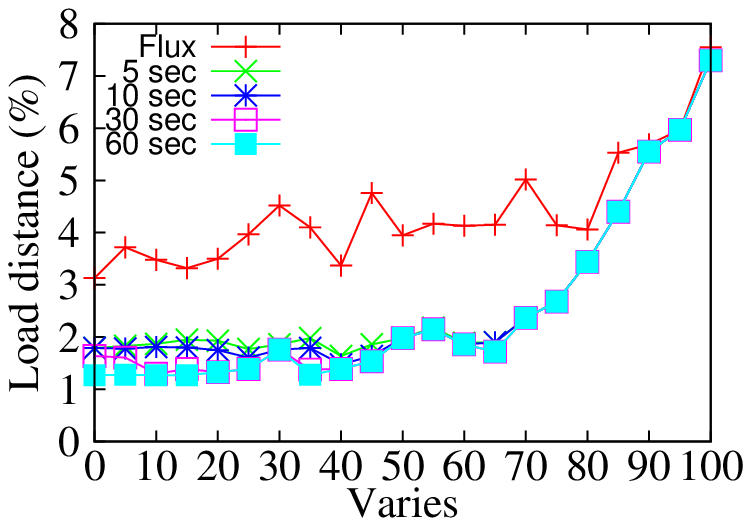}
	MaxMigrations = 20
\end{minipage}
\begin{minipage}{4.38cm}
	\centering
	\includegraphics[width=110px]{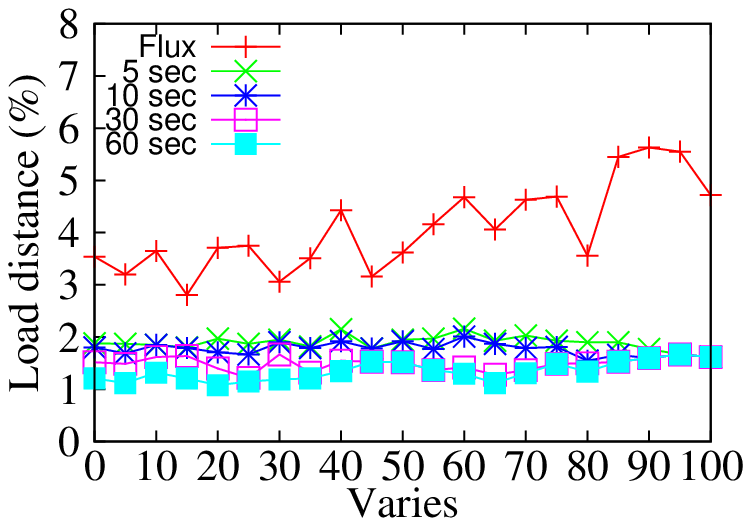}
	MaxMigrations = 30
\end{minipage}
\begin{minipage}{4.38cm}
	\centering
	\includegraphics[width=110px]{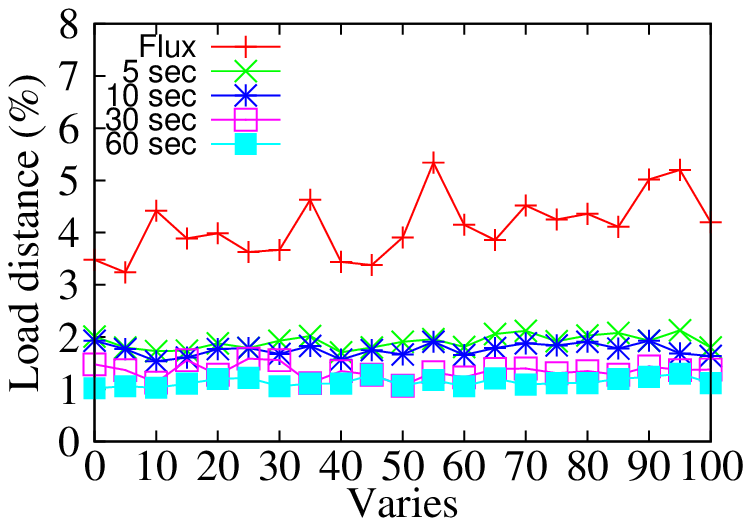}
	MaxMigrations = 40
\end{minipage}
\caption{Experimental Setting: 40 nodes, 800 key groups, 20 operators}
\label{eval:milp_performance_40nodes}
\end{figure*}

\begin{figure*}[h!t]
\centering
\begin{minipage}{4.38cm}
	\centering
	\includegraphics[width=110px]{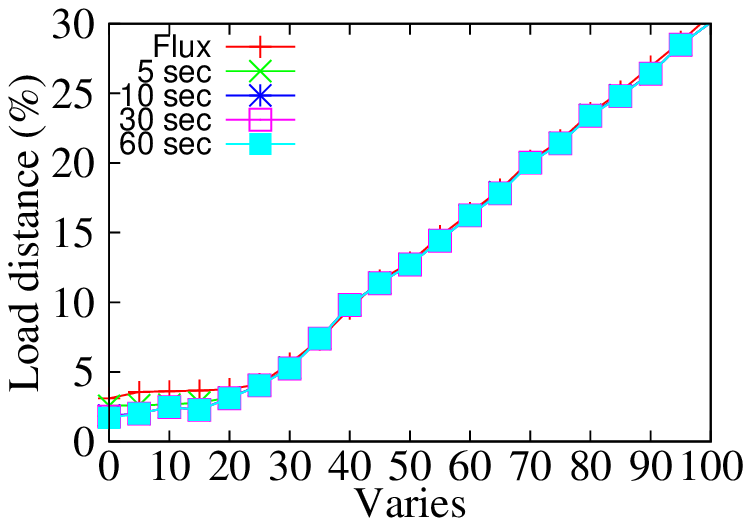}
	MaxMigrations = 10
\end{minipage}%
\begin{minipage}{4.38cm}
	\centering
	\includegraphics[width=110px]{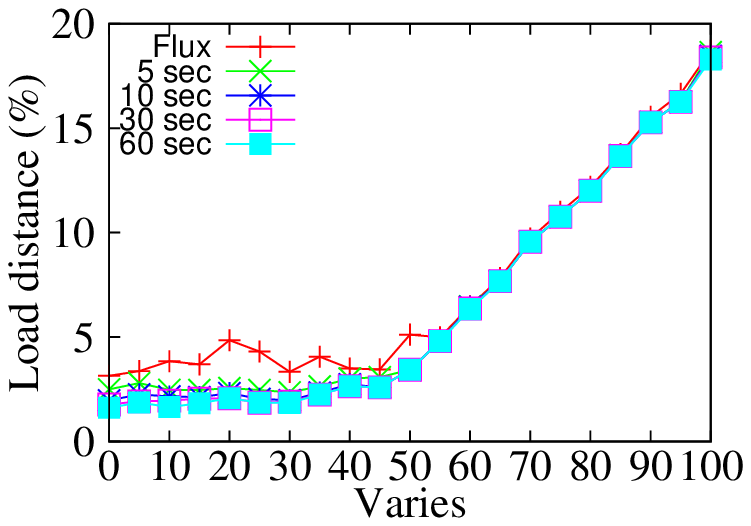}
	MaxMigrations = 20
\end{minipage}
\begin{minipage}{4.38cm}
	\centering
	\includegraphics[width=110px]{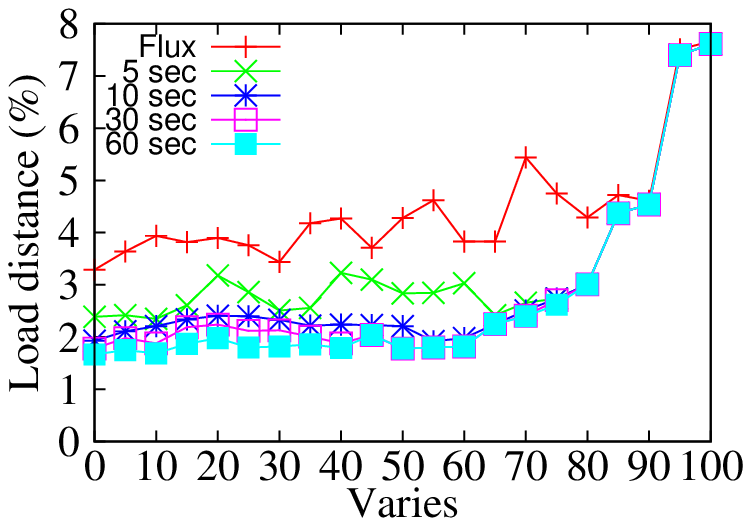}
	MaxMigrations = 30
\end{minipage}
\begin{minipage}{4.38cm}
	\centering
	\includegraphics[width=110px]{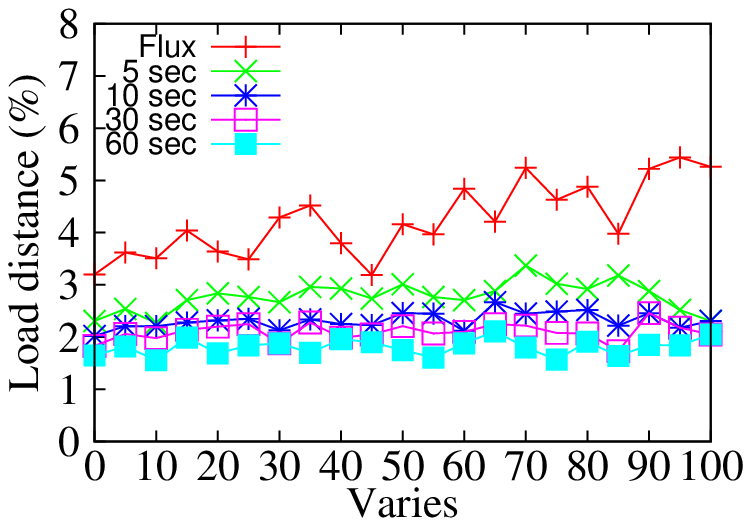}
	MaxMigrations = 40
\end{minipage}
\caption{Experimental Setting: 60 nodes, 1200 key groups, 30 operators}
\label{eval:milp_performance_60nodes}
\end{figure*}

\textbf{Hardwares.}
All our distributed experiments are conducted on Amazon Elastic Compute Cloud
(EC2) with Apache Storm. A cluster consists of one \textit{master node}, between
one and four \textit{input nodes} and between five and twenty \textit{worker
nodes}. The master node has instance type $m1.medium$, and executes the Storm
master, Apache Zookeeper \cite{Hunt:2010:ZWC:1855840.1855851} and the Storm UI.
The worker nodes have instance type $m1.medium$ and are used to execute the
actual job logic. The input nodes have instance type $m3.xlarge$ and are used to
produce input to the processing nodes.
A few optimizer experiments are also executed \textit{locally}, on a single
desktop computer, with a Core i7-2600K (3.4Ghz) processor and 8G of memory,
running Windows 8.1.


\textbf{Initialization.}
As Apache Storm is based on Java, it requires an initialization phase after
deployment, where the Just-In-Time compiler can do runtime optimizations. To
respect this, we ignore the unstable initialization phase for each
experiment, which can be seen from the missing initial time periods in our
figures.

\textbf{Metrics.}
We use \emph{load distance} to measure the load imbalance. As shown in previous
work~\cite{Xing:2006:PRL:1182635.1164194, Lei:2014:RDQ:2627748.2602138}, a more
balanced load distribution can be more resilient to short-term load fluctuations
by minimizing the chances of overloading and the queueing latency during load
spikes.  

To verify the collocation of communicating key groups can reduce the system
workload, we use the metric \emph{load index} to measure how the system load is
changed over time. It is defined as the current average system load 
divided by the average system load right after the initialization phase. 

\textbf{Techniques.}
The MILP (and ALBIC) is implemented using \textit{IBM ILOG CPLEX} v$12.6.1$
\cite{CplexWebsite} and \textit{Metis} v$5.1$ \cite{MetisWebsite}. 
In addition, all the approaches are executed with default parameters stated in
the previous sections. 

We compare our MILP with Flux~\cite{shah:2003:flux} and PoTC~\cite{7113279},
which can achieve dynamic load balancing. As Flux controls overhead by limiting
the number of state migrations, we modify our MILP in the same way (i.e. we
change from a limit on $maxMigrCost$ to $maxMigrations$).  
We also compare ALBIC with COLA~\cite{khandekar2009cola}, which, to the best of
our knowledge, is the only approach that achieve similar objectives as ALBIC.
As COLA is a static optimizer and does not consider runtime adaptations,
invoking it for each adaptation period would incur massive state migrations. The
comparison is to show how well ALBIC adapts a key group allocation
plan in comparison to a complete re-optimization. 
%

\textbf{Datasets.} 
We use three datasets for our experiments.  The dataset \textit{Parsed Wikipedia
edit history} \cite{snapnets}, contains the complete Wikipedia edit history up
to January 2008. The data is rich with minimum $14$ attributes for each of the
$116,590,856$ article revisions.  The data input rate is fluctuating in the
order of hundreds of tuples per second. To better illustrate the capability of
our solution, we scale the size of the data, while maintaining the input
distribution.

The dataset \textit{Airline On-Time} \cite{BtsWebsite} is provided by the
Research and Innovative Technology Administration, United States Department of
Transportation. We use data from the period January 2004 to December 2013, which
contains information about airplanes, such as departure, arrival, expected time
of arrival.

The dataset \textit{Global Surface Summary of the Day} \cite{GsodWebsite} is
provided by the National Oceanic and Atmospheric Administration, USA. We use
data from the period January 2004 to December 2013. The data contains mean
temperature, mean visibility, precipitation and more, for each of the several
thousand weather stations.

\subsection{Solver Performance for MILP}
\label{subsec:milp_solver_performance}
We first investigate the relationship between solving time and
solution quality for the MILP. The
experiment was executed with the \textit{local} desktop.

As the performance of the solver is dependent on multiple factors such as
cluster size and load distribution, we use synthetic data to simulate these
scenarios. Key groups are evenly allocated, such that each node has the same
number of key groups. The load of each key group is initially set to the mean,
which is then adjusted by a percentage, randomly chosen from the range
$[-5\%, 5\%]$. This is to simulate that in a realistic setting, the nodes that
process these key groups have load fluctuation from one to another.  Now we
further adjust the load on $20\%$ of the nodes, which is controlled by a
variable called \textit{varies}. Half of the nodes whose load are changed gets a
reduction of $0.5 \cdot varies$ in their load and the other half gets an
increase in load of $0.5 \cdot varies$. The load changes are done by modifying
the load of a randomly selected set of key groups on a node.

Figures~\ref{eval:milp_performance_20nodes} -
\ref{eval:milp_performance_60nodes} show the results of three different
clusters, with varying values of maxMigrations and load fluctuations.
The results show that our MILP approach consistently outperforms Flux, even
after just a few seconds of optimization by the solver. The MILP approach
quickly converges towards a pretty good solution (within a few seconds), which
can be improved only slightly by further solving. It is only for the largest
cluster considered in the experiments, that it could be benefical to solve for
more than a few seconds. 

Next we investigate the effect of the integration of horizontal scaling with our
MILP, by comparing it with a non-integrated approach, which first 
performs scale-in as an independent process and then tries to move the key groups
from the to-be-removed nodes to the other nodes evenly. We experiment on the
largest cluster defined above. We set \textit{maxMigrations} to twenty and mark
ten nodes for removal. We tested two situations, where one and five nodes are
overloaded (100\% loaded), denoted as $1$OL and $5$OL respectively.  The result
in Figure~\ref{fig:integration} shows that the integrated approach obtains a
good load distance much faster than the non-integrated approach for both cases,
while being able to complete scaling-in within a similar time period. This is
because the integrated MILP can adaptively prioritize the more urgent migrations
to keep the load distance within a good range without sacrificing much on the
scaling time.

\begin{figure}[h] \centering
  \begin{subfigure}[b]{0.48\columnwidth} 
	\centering
	\includegraphics[height=80px]{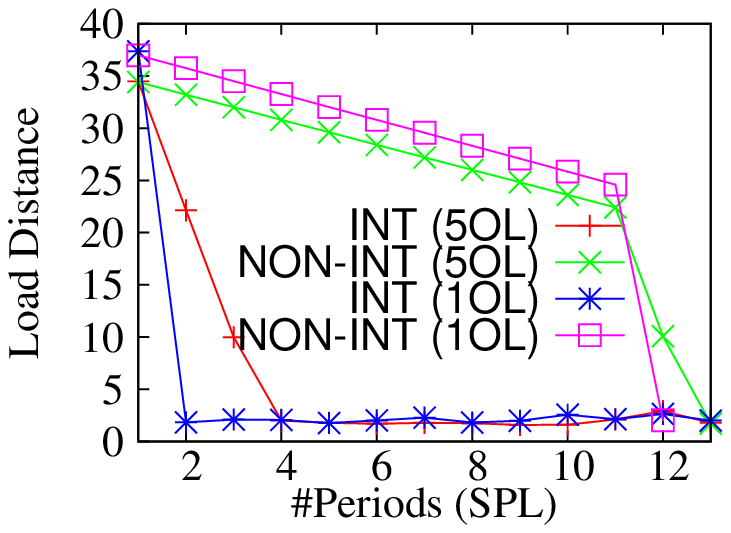}
	\caption{Load Distance}
\end{subfigure}%
\hfill
  \begin{subfigure}[b]{0.48\columnwidth} 
	\centering
	\includegraphics[height=80px]{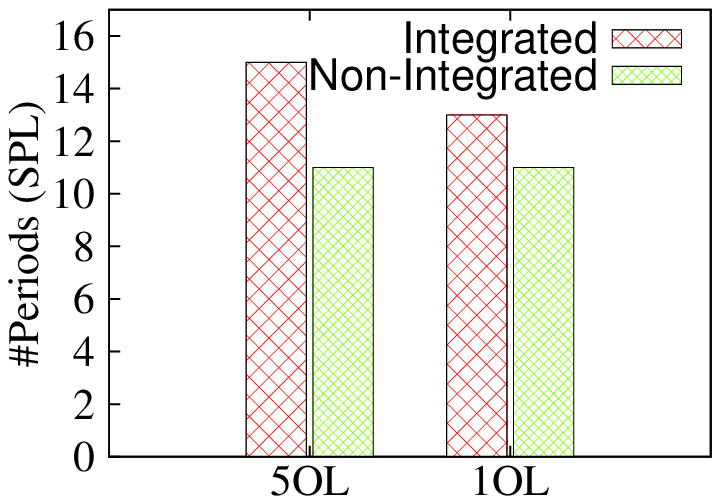}
	\caption{Time to scale in}
\end{subfigure}%
	\caption{Integrating horizontal scaling with load balancing}
	\label{fig:integration}
\end{figure}

\subsection{Load Balancing with MILP}
\label{sec:load balancing-experiment}
This experiment investigates the quality and overhead of load balancing which
can be obtained by solving the MILP. We first provide a comparison to Flux
\cite{shah:2003:flux} and to the ``Power of Both Choices'' (PoTC)
\cite{7113279}. Later, we evaluate the importance of limiting the overhead of
the MILP. The experiment was executed on $EC2$, with one \textit{input node} and
twenty \textit{worker nodes}. The experiment was executed on the dataset
\textit{Parsed Wikipedia edit history}.

\textbf{Real Job 1.} The job consists of one input and three operators with one hundred
key groups each. The first operator calculates a GeoHash value per input tuple.
The second operator calculates TopK updated articles with a window of $1$ minute
and the last operator calculates global TopK updated articles, also with a
window of $1$ minute. The dataset does not contain location data, so we assume a
completely even distribution of GeoHash values covering Denmark.

\begin{figure}[H] 
	\centering 
	\includegraphics[height=18px]{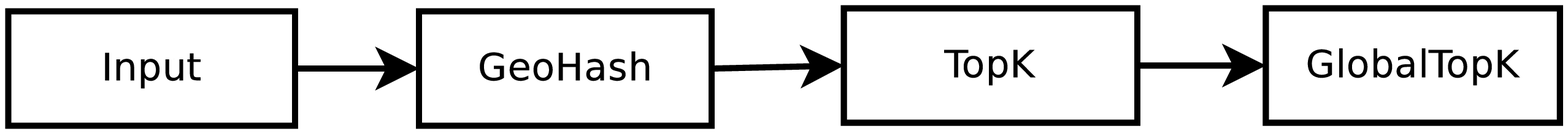}
\end{figure}

\subsubsection{Comparison with Flux and PoTC }
\label{eval:compare-flux}
We set $maxMigrations$ to 13 key groups per SPL, for both the MILP and Flux.

\begin{figure}[h] \centering
\begin{minipage}{0.48\columnwidth}
	\centering
	\includegraphics[width=110px]{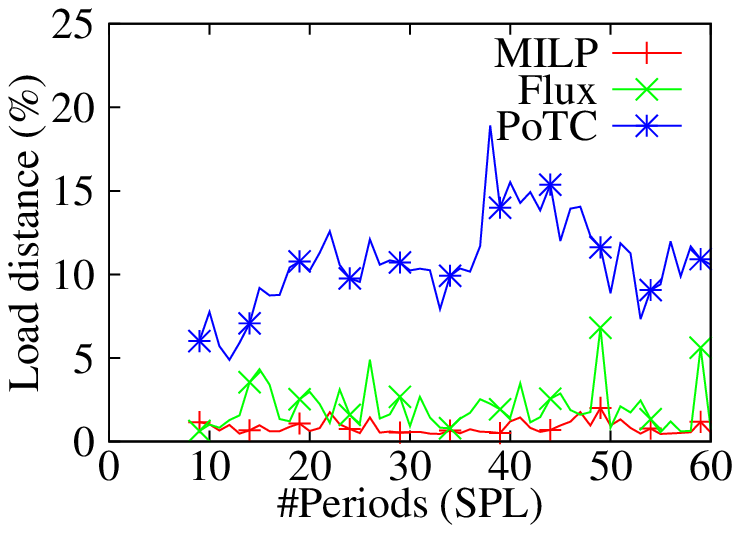}
	\caption{Quality}
	\label{fig:load_balance_performance}
\end{minipage}%
\hfill
\begin{minipage}{0.48\columnwidth}
	\centering
	\includegraphics[width=110px]{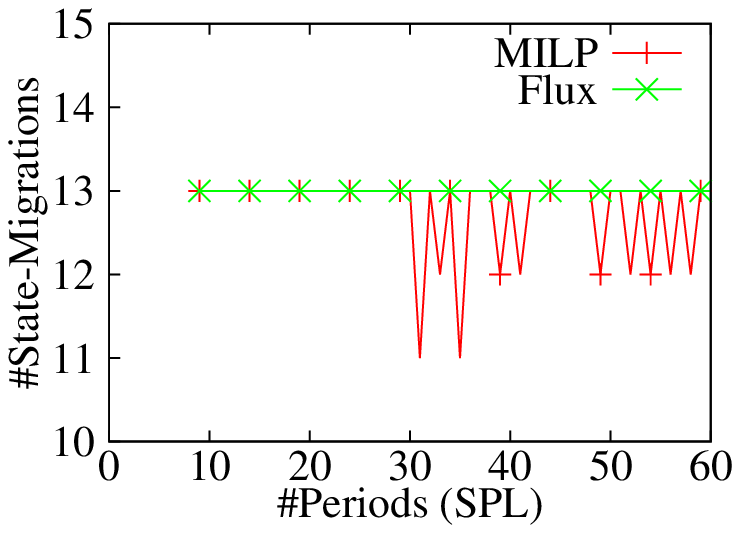}
	\caption{\#Migrations}
	\label{fig:load_balance_overhead}
\end{minipage}
\end{figure}


Figure \ref{fig:load_balance_performance} shows the load distances, directly after
applying migrations. The MILP approach achieves a stable load distance
consistently below 1\%, while Flux exhibits significant fluctuations up to 7\%.
The reason that our approach outperforms Flux, is that Flux makes sub-optimal
state migration decisions, i.e. Flux can require more state migration operations
to load balance than what is necessary, meaning the \textit{maxMigrations}
constraint will then prevent Flux from achieving as good load balance as the
MILP.

PoTC cannot achieve a consistently good load distance. This happens because the
job needs to do merge (every minute) and the amount of state to merge, varies
over time and from node to node. This introduces skewness in the load, which is
not considered by the PoTC approach.  Notice also that the PoTC approach incurs
a continuous overhead to process the merging operations, regardless of the load
distribution.



To recap, the benefits of our MILP is threefold: (1) it achieves
better load balancing than Flux and PoTC; (2) it outperforms Flux and
PoTC in terms of overhead; and (3) it can easily be adjusted by adding
other constraints to support specific cases, e.g. limiting the maximum memory
usage on a node.

\subsubsection{Unrestricted Load Balancing}
\label{subsubsec:unrestricted_load_balancing}
If the overhead of load balancing is not restricted, the MILP solver may
potentially migrate too many key groups. To avoid this problem, the MILP has a
constraint of the maximum migration cost which can be incurred per invocation.
In this experiment, we investigate how the load balance ``quality'' and the load
balance overhead, depends on this constraint (here shown as
\textit{maxMigrations}).

\begin{figure}[h]
\centering
\begin{minipage}{0.48\columnwidth}
	\centering
	\includegraphics[width=110px]{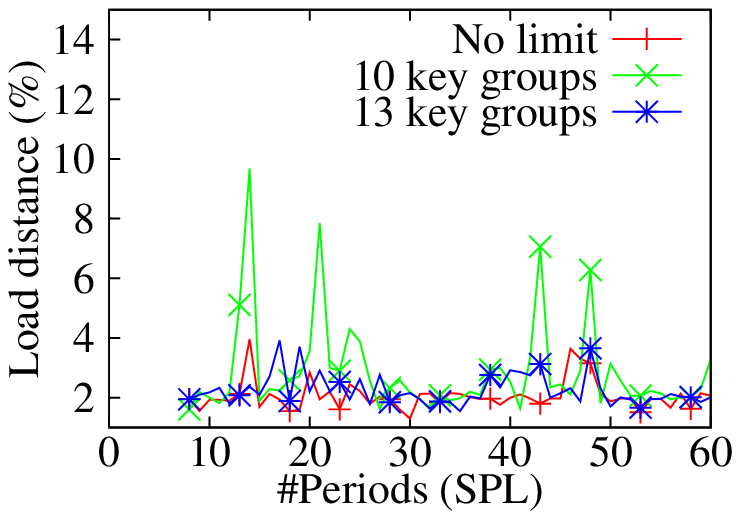}
	\caption{Quality}
	\label{fig:load_balance_unrestricted_performance}
\end{minipage}%
\hfill
\begin{minipage}{0.48\columnwidth}
	\centering
	\includegraphics[width=110px]{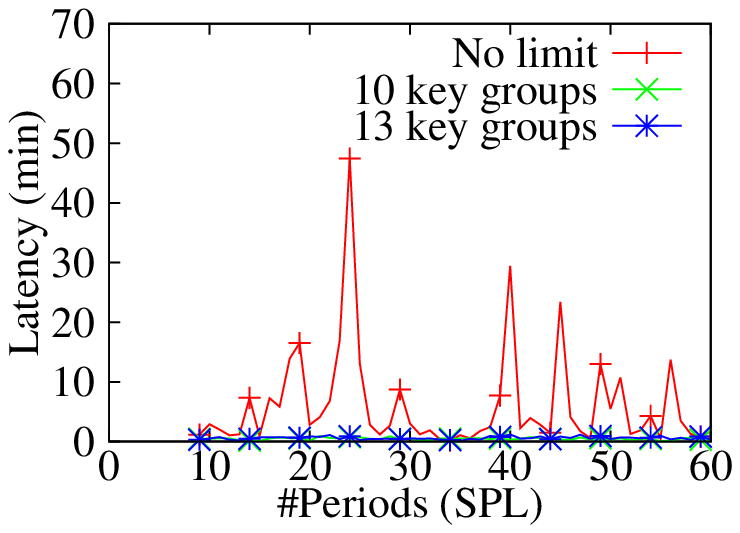}
	\caption{Overhead}
	\label{fig:load_balance_unrestricted_overhead}
\end{minipage}
\end{figure}

Figure \ref{fig:load_balance_unrestricted_performance} shows the obtained load
balancing is highly correlated with the number of allowed state migrations. The
unrestricted solution provides the best load balance, while the one with a limit
of 10 key groups provides a solution with significant spikes.
Figure~\ref{fig:load_balance_unrestricted_overhead} shows that the overhead in
terms of migration latency is very large for the unrestricted solution, due to a
large number of key groups being migrated. The y-axis is the sum of latency
incurred by all state migrations, which is defined as the amount of time that
the processing of a to-be-migrated key group is paused. 

Each key group to migrate incurs in average $2.5$ seconds of latency for this
experiment, when migrating up to $13$ key groups at a time. Assume e.g. $SPL$ is
set to five minutes, which means each of the 13 key groups to migrate, will only
be paused for $\frac{2.5}{300} \cdot 100 = 0.8\%$ of the total time. As the
experiment was executed with 300 key groups, only $\frac{2.5*13}{300 \cdot 300}
\cdot 100 = 0.04\%$ of the total processing incurs latency. Furthermore, using a
more sophisticated state migration technique than direct state migration, could
potentially reduce the latency to almost zero
\cite{Madsen:2015:DRM:2806416.2806449}. Our approach can therefore incur very
low overhead on the system.

\subsection{Load Balance and Collocation (Synthetic)}
\label{subsec:miqp_solver_performance}
In this experiment we use the synthetic data and job to compare ALBIC and COLA, in terms
of of load distance and
collocation. The synthetic data and job are more flexible for us to generate a
large topology and vary the parameters.  The experimental configuration here is
the same as that in Section~\ref{subsec:milp_solver_performance}, except that we
control the maximum obtainable collocation by ensuring x\% of the key groups to
have 1-1 communication. Furthermore, for each iteration of solving, the load of
20\% of the nodes is adjusted by a percentage, randomly chosen from the range
$[-2\%, 2\%]$.

\begin{figure}[h] \centering
\begin{minipage}{0.48\columnwidth}
	\centering
	\includegraphics[width=110px]{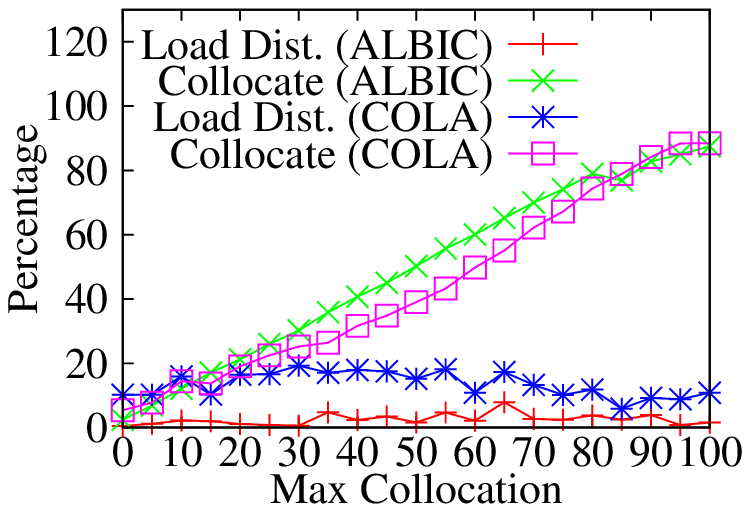}
	\caption{Collocation}
	\label{fig:miqp_solver_performance_2}
\end{minipage}%
\hfill
\begin{minipage}{0.48\columnwidth}
	\centering
	\includegraphics[width=110px]{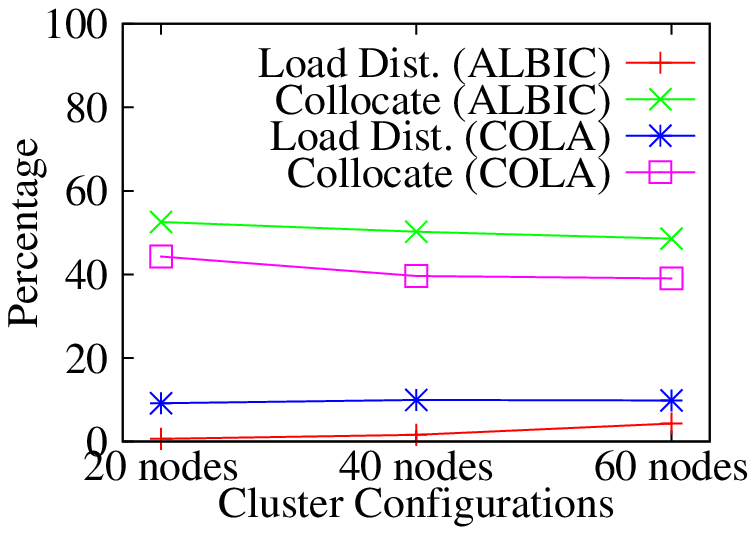}
	\caption{Configurations}
	\label{fig:miqp_solver_performance_3}
\end{minipage}
\end{figure}

%
\begin{figure*}[tb]
\centering
\begin{minipage}{4.38cm}
	\centering
	\includegraphics[width=110px]{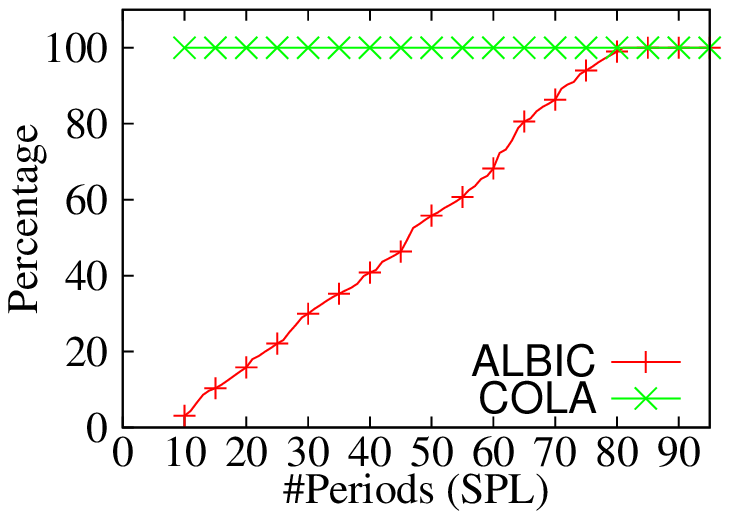}
	Collocation Factor
\end{minipage}%
\begin{minipage}{4.38cm}
	\centering
	\includegraphics[width=110px]{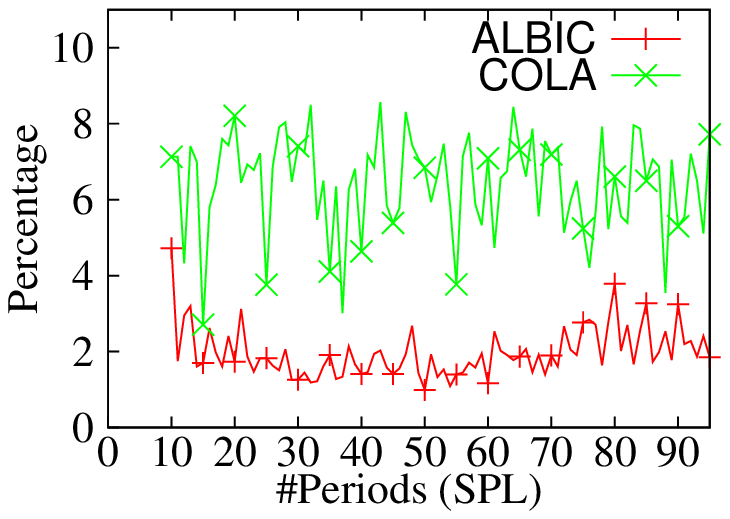}
	Load Distance
\end{minipage}
\begin{minipage}{4.38cm}
	\centering
	\includegraphics[width=110px]{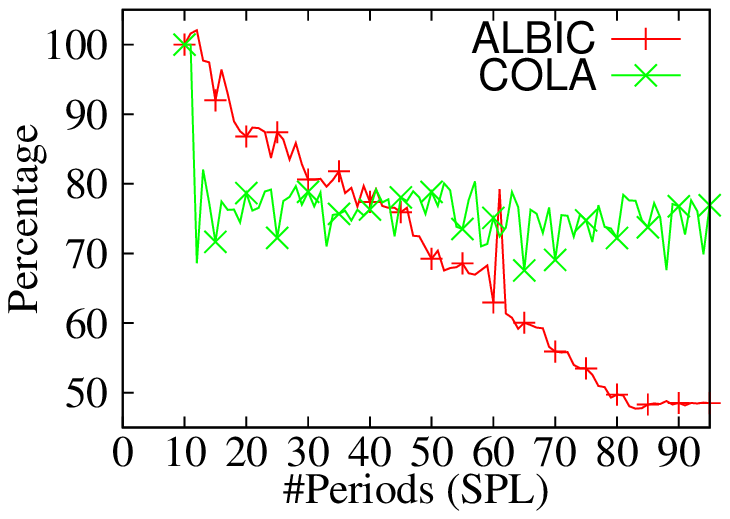}
	Load Index
\end{minipage}
\begin{minipage}{4.38cm}
	\centering
	\includegraphics[width=110px]{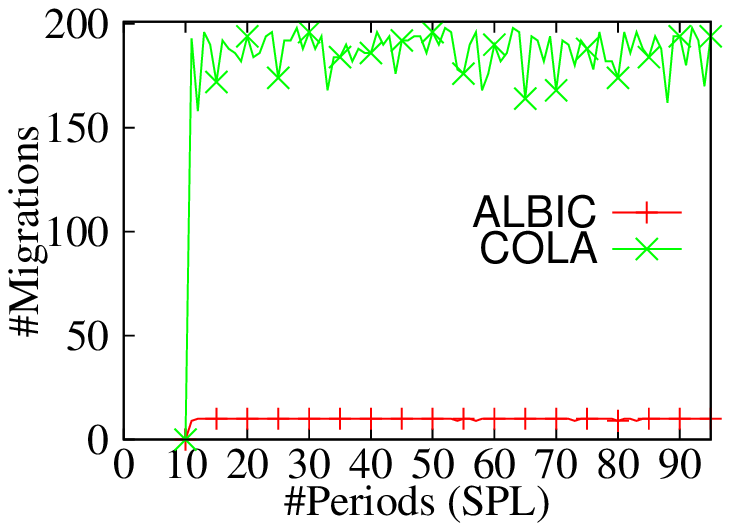}
	\#Migrations
\end{minipage}
\caption{Real Job 2}
\label{fig:mincost_lb_job1_2ops}
\end{figure*}

\begin{figure*}[tb]
\centering
\begin{minipage}{4.38cm}
	\centering
	\includegraphics[width=110px]{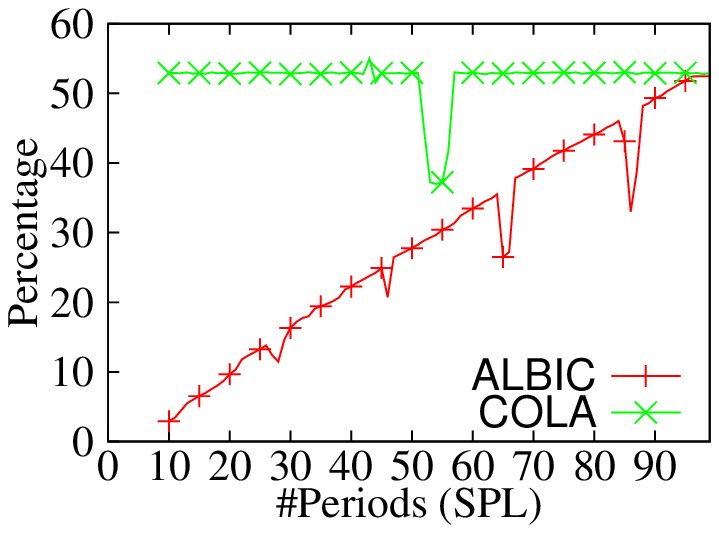}
	Collocation Factor
\end{minipage}%
\begin{minipage}{4.38cm}
	\centering
	\includegraphics[width=110px]{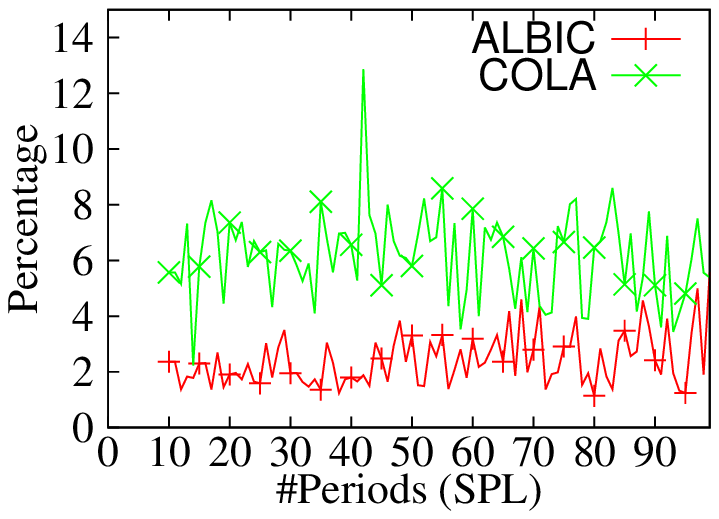}
	Load Distance
\end{minipage}
\begin{minipage}{4.38cm}
	\centering
	\includegraphics[width=110px]{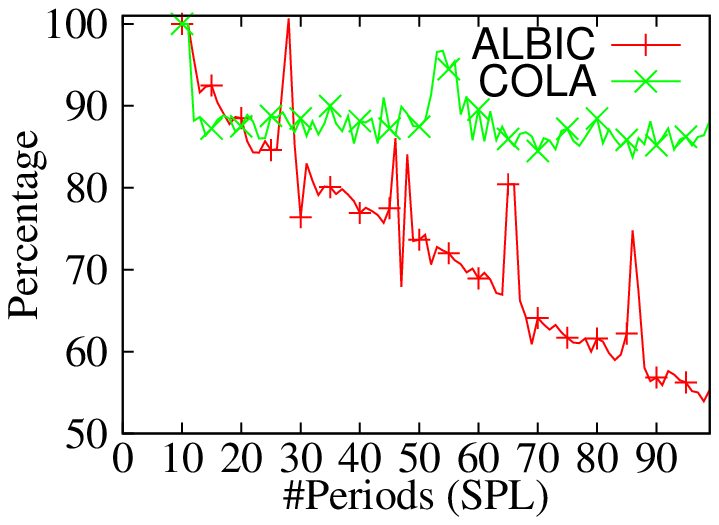}
	Load Index
\end{minipage}
\begin{minipage}{4.38cm}
	\centering
	\includegraphics[width=110px]{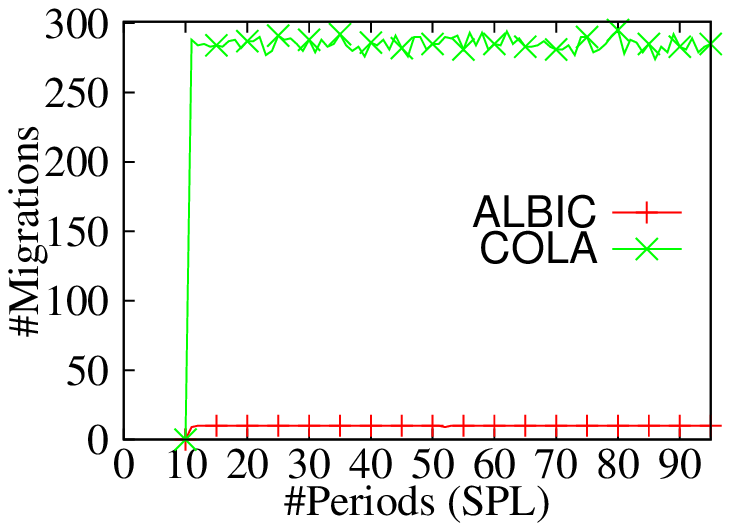}
	\#Migrations
\end{minipage}
\caption{Real Job 3}
\label{fig:mincost_lb_job2_3ops}
\end{figure*}

Figure~\ref{fig:miqp_solver_performance_2} considers a cluster of 40 nodes, 800
key groups and 20 operators. We set $maxMigrations = 20$ and vary the maximum
collocation factor from 0 to 100.
The figure shows that ALBIC achieves a smaller, hence better, load distance than
COLA. ALBIC also outperforms COLA in terms of collocation by up to ten percent.
Recall that both approaches try to define partitions containing collocated
key groups, while respecting some maximum load distance. The partitions must be
split until the load distance requirements are satisfied by the allocation.
Since ALBIC uses a much more sophisticated technique to do load balance, namely
MILP, in comparison to the simple heuristic in COLA, it does not need to split
partitions as much as COLA to achieve the same load distance. Therefore, it
achieves better key group collocations.

In Figure \ref{fig:miqp_solver_performance_3} we set the maximum collocation
factor to 50 and use the following three cluster
configurations: (1) 20 nodes, 400 key groups, 10 operators; (2) 40 nodes, 800
key groups, 20 operator; and (3) 60 nodes, 1200 key groups, 30 operators.
The results show that both ALBIC and COLA can achieve good solutions, while
ALBIC significantly and consistently outperforms COLA in both load
distance and collocation for various system sizes.


\subsection{Load Balance and Collocation (Real Data)}
\label{sec:collocation-experiment}

In this experiment, we compare ALBIC and COLA with real data and realistic jobs.
The experiment is executed on $EC2$, with four \textit{input node} and twenty
\textit{worker nodes}.

\textbf{Real Job 2} contains two operators, with five key groups per operator per
node. The first operator extracts delays and the second operator sums delays by
airplane per year. Both operators are parallelized on the same attribute. Thus,
it is possible to define a perfect collocation of operators, where no data needs
to be serialized and deserialized. Similarly, we can also define a worst
allocation of operators, where every tuple needs to be sent over the network.
The initial allocation of key groups is chosen such that the initial collocation
is as little as possible, which is to see if ALBIC can gradually increase the
collocation at runtime.

\begin{figure}[H]
	\centering
	\includegraphics[height=18px]{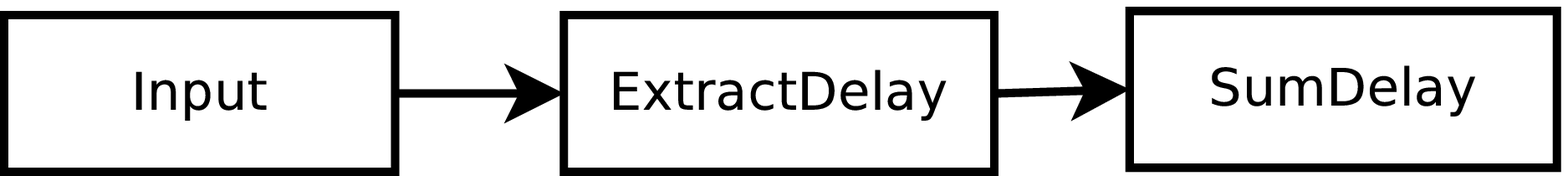}
\end{figure}

Figure~\ref{fig:mincost_lb_job1_2ops} shows the collocation factor, load distance,
load index and number of migrations done. As expected, COLA reaches the
optimum collocation factor immediately because it optimize the plan from
scratch. ALBIC, using its adaptive strategy, can gradually reach the same
collocation factor over time. Furthermore, thanks to the MILP solver, ALBIC can
consistently achieve a lower load distance than COLA, which adopts simple
heuristics. For the load index, ALBIC gets a decrease from $100\%$ to $50\%$,
which means the system load has been cut in half due to the collocation of
communicating key groups. COLA can only reduce the load by $25\%$, due to the
large overhead of migrations. COLA migrates close to 200 key groups per SPL,
while ALBIC only migrates 10 key groups. 
In summary, this experiment verifies that data communication does bring a
significant workload to the system and minimizing it helps rectifying system
overload, and potentially save the overhead of scaling-out.

{\bf Real Job 3} extends real job 2, with an additional operator that sums
delays per route, where a unique route is defined by having the same origin and
destination airport. We do not consider routes that span multiple airports.
To execute this experiment, the input rate for the COLA experiment is lowered to
50\% (maintaining data distribution), because the overhead of migrations is
simply too overwhelming for the system.

\begin{figure}[H]
	\centering
	\includegraphics[height=18px]{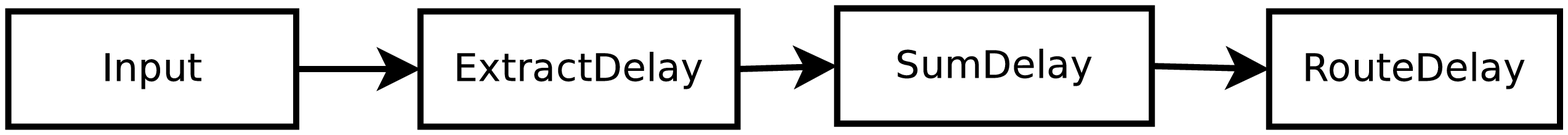}
\end{figure}


Figure~\ref{fig:mincost_lb_job2_3ops} shows a similar trend as the previous
experiment. One can see that the collocation factor is only
half of the previous experiment. This is because the RouteDelay operator cannot
be collocated with the SumDelay operator. 

{\bf Real Job 4} extends real job 3 with several additional operators. A
WeatherInput operator reads streaming input from the dataset \textit{Global
Surface Summary of the Day} based on the timestamps in the data. From the
weatherdata we calculate a rainscore, which is a value from 0 to 100, calculated
as the percentage of precipitation compared to the maximal historically measured
value. The higher the rainscore, the more rain there was. The computation could
be extended with a score for wind, atmospheric pressure and even thunderstorms.
Each route is joined with the rainscore for a given route and the courier
efficiency is calculated as the sum of delays for rainscores in intervals of
ten. The \textit{store} operators periodically writes results to a local
relational database.

\begin{figure}[H]
	\centering
	\includegraphics[height=40px]{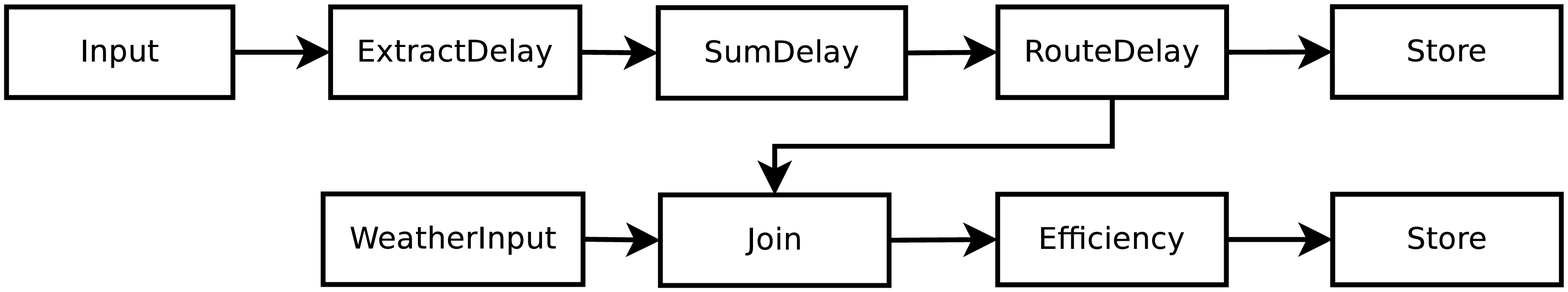}
\end{figure}

Figure \ref{result_real_job2_2} shows the results. For brevity, we do not show
the number of migrations conducted, but it is $10$ for each iteration of ALBIC
as expected.  It is impossible to run COLA for each adaptation period, since the
overhead of migrating key groups is too massive and exceeds the system capacity.
Instead, we have executed the job three times using a random allocation without
collocation and measured the collocation factor COLA achieves, which is very
consistent around 61\%. ALBIC gradually achieves a similar collocation factor and
reduces the load index while maintaining a low load distance. 

\begin{figure}[t]
	\centering
	\includegraphics[height=70px]{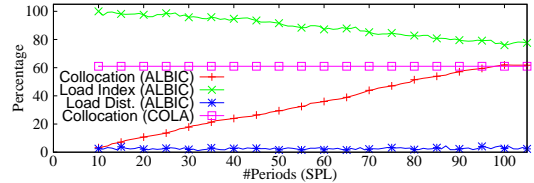}
	\caption{Real Job 4}
	\label{result_real_job2_2}
\end{figure}

As verified by the above experiments, both ALBIC and COLA can improve the
collocation, while exhibiting a reasonable load distance at runtime. The benefit
of ALBIC is twofold: (1) it incurs much less overhead at runtime and (2) it
continuously and adaptively minimizes the load distance. It would be reasonable
to use COLA for an initial key group allocation at job submission, and then to
use ALBIC for maintaining a good allocation at runtime. If one uses a simpler
load balancing algorithm such as MILP or Flux instead of ALBIC, the collocation
achieved by COLA would deteriorate at runtime.


We have also studied how collocation can improve the performance of Real Job 1
described in Section~\ref{sec:load balancing-experiment}. The result is that the
collocation maxes out at around 5\%, which is not large enough to make any
conclusions on the saved workload, as the savings are masked by the load
fluctuations. The collocation optimization has little effect to this job due to
the way the input data of each operator is partitioned.
The three partitioning functions are all independent on each other, hence they
all exhibit the Full Partitioning pattern with very even distributions, which
has little opportunity for collocation.

%
%

\section{Conclusion}
We have presented an integrated solution for load balancing, horizontal scaling
and operator instance collocation suitable for general PSPEs. We first
investigated how to model load balancing and dynamic scaling as an MILP, which
is suitable when collocation of operator instances has little
effect on the communication cost. As verified by our experiments, solving the
MILP problem can achieve better load balance with a lower overhead of state
migration compared to existing approaches. We then presented an extension of
MILP called ALBIC, which optimizes the collocation of operator instances, while
maintaining good load balancing and incurring low overhead at runtime.  Our
experiments verify that it maximizes the beneficial collocations without
sacrificing the load balance and adaptation cost.

\balance

\bibliographystyle{abbrv}
\bibliography{document}
%

\end{document}